\documentclass[12pt]{article}

\usepackage[a4paper,left=20mm,top=20mm,right=20mm,bottom=25mm]{geometry}

\usepackage{graphicx}
\usepackage{subcaption}
\usepackage{algorithm}
\usepackage[noend]{algpseudocode}
\usepackage{amsmath, amsfonts, amssymb, amsthm}
\usepackage{mathtools}
\usepackage{lscape}

\usepackage{color}
\usepackage[colorlinks]{hyperref}
\hypersetup{
	citecolor=blue,
   linkcolor=blue,
}
\usepackage[font=footnotesize,width=.85\textwidth,labelfont=bf]{caption}
\usepackage[mathcal]{euscript}
\usepackage{mathptmx} 

\usepackage{authblk}

\usepackage[colorlinks]{hyperref}
\hypersetup{
	citecolor=blue,
   linkcolor=red,
}
\usepackage[font=footnotesize,width=.85\textwidth,labelfont=bf]{caption}
\usepackage[mathcal]{euscript}
\usepackage{mathptmx}

\newcommand*\Let[2]{\State #1 $\gets$ #2}
\algrenewcommand\algorithmicrequire{\textbf{Precondition:}}
\algrenewcommand\algorithmicensure{\textbf{Postcondition:}}



\newcommand{\LO}{\textsf{LO}}
\newcommand{\UP}{\textsf{UP}}
\newcommand{\lo}{\textsf{lo}}
\newcommand{\up}{\textsf{up}}

\newcommand{\Gen}{\ensuremath{\mathbb{G}}}
\newcommand{\Spe}{\ensuremath{\mathbb{S}}}

\newcommand{\EHGT}{\ensuremath{\mathcal{E}}}
\newcommand{\Th}{\ensuremath{T_{\mathcal{\overline{E}}}}}
\newcommand{\lca}{\ensuremath{\operatorname{lca}}}
\newcommand{\spe}{\bullet}
\newcommand{\dpl}{\square}
\newcommand{\hgt}{\triangle}
\newcommand{\lea}{\odot}
\newcommand{\sT}{\sigma_{\Th}}
\newcommand{\tT}{\ensuremath{\tau_T}}
\newcommand{\ts}{\ensuremath{\tau_S}}
\newcommand{\mc}{\mathcal}
\newcommand{\mb}{\mathbb}

\providecommand{\keywords}[1]{\textbf{\textit{Keywords: }} #1}

\newtheorem{theorem}{Theorem}
\newtheorem{lemma}{Lemma}

\newtheorem{definition}{Definition}

\makeindex             


\begin{document}

\title{Forbidden Time Travel: Characterization of Time-Consistent Tree Reconciliation Maps}

\author[1,4]{Nikolai N{\o}jgaard}
\author[3]{Manuela Gei{\ss}}
\author[3,5,6,7]{Peter F.\ Stadler}
\author[4]{Daniel Merkle}
\author[8]{Nicolas Wieseke}
\author[1,2]{Marc Hellmuth}

\affil[1]{\footnotesize Dpt.\ of Mathematics and Computer Science, University of Greifswald, Walther-
  Rathenau-Strasse 47, D-17487 Greifswald, Germany \\
	\texttt{mhellmuth@mailbox.org}, 	\texttt{nikolai.nojgaard@uni-greifswald.de} }
\affil[2]{Saarland University, Center for Bioinformatics, Building E 2.1, P.O.\ Box 151150, D-66041 Saarbr{\"u}cken, Germany }
\affil[3]{Bioinformatics Group, Department of Computer Science; and
		    Interdisciplinary Center of Bioinformatics, University of Leipzig, \\
			 H{\"a}rtelstra{\ss}e 16-18, D-04107 Leipzig}
\affil[4]{Department of Mathematics and Computer Science,
		    University of Southern Denmark, Denmark }
\affil[5]{Max-Planck-Institute for Mathematics in the Sciences, \\
  Inselstra{\ss}e 22, D-04103 Leipzig}
\affil[6]{Inst.\ f.\ Theoretical Chemistry, University of Vienna, \\
  W{\"a}hringerstra{\ss}e 17, A-1090 Wien, Austria}
\affil[7]{Santa Fe Institute, 1399 Hyde Park Rd., Santa Fe, USA} 
\affil[8]{Parallel Computing and Complex Systems Group \\
  Department of Computer Science, 
  Leipzig University \\
  Augustusplatz 10, 04109, Leipzig, Germany}
\date{}
\normalsize

\maketitle

\abstract{ \noindent
 \emph{\textbf{Motivation:}} In the absence of horizontal gene transfer it is
  possible to reconstruct the history of gene families from empirically
  determined orthology relations, which are equivalent to
  \emph{event-labeled} gene trees. Knowledge of the event labels
  considerably simplifies the problem of reconciling a gene tree $T$ with a
  species trees $S$, relative to the reconciliation problem without prior
  knowledge of the event types. It is well-known that optimal
  reconciliations in the unlabeled case may violate time-consistency and
  thus are not biologically feasible.  Here we investigate the mathematical
  structure of the event labeled reconciliation problem with horizontal
  transfer.  \\
	\emph{\textbf{Results:}} We investigate the issue of time-consistency
  for the event-labeled version of the reconciliation problem, provide a
  convenient axiomatic framework, and derive a complete characterization of
  time-consistent reconciliations. This characterization depends on
    certain weak conditions on the event-labeled gene trees that reflect
    conditions under which evolutionary events are observable at least in
    principle. We give an $\mathcal{O}(|V(T)|\log(|V(S)|))$-time algorithm
  to decide whether a time-consistent reconciliation map exists. It does
  not require the construction of explicit timing maps, but relies entirely
  on the comparably easy task of checking whether a small auxiliary graph
  is acyclic.  \\
\emph{\textbf{Significance:}} The combinatorial characterization of
  time consistency and thus biologically feasible reconciliation is an
  important step towards the inference of gene family histories with
  horizontal transfer from orthology data, i.e., without presupposed gene
  and species trees. The fast algorithm to decide time consistency is 
  useful in a broader context because it constitutes an attractive
  component for all tools that address tree reconciliation problems.
}

\bigskip
\noindent
\keywords{Tree Reconciliation;
          Horizontal Gene Transfer;
          Reconciliation Map;
          Time-Consistency;
          History of gene families}

\sloppy

\section{Introduction}

Modern molecular biology describes the evolution of species in terms of the
evolution of the genes that collectively form an organism's genome. In this
picture, genes are viewed as atomic units whose evolutionary history
\emph{by definition} forms a tree. The phylogeny of species also forms a
tree. This species tree is either interpreted as a consensus of the gene
trees or it is inferred from other data. An interesting formal manner to
define a species tree independent of genes and genetic data is discussed
e.g.\ in \cite{Dress:10}. In this contribution, we assume that gene and
species trees are given independently of each other. The relationship
between gene and species evolution is therefore given by a reconciliation
map that describes how the gene tree is embedded in the species tree: after
all, genes reside in organisms, and thus at each point in time can be
assigned to a species.

From a formal point of view, a reconciliation map $\mu$ identifies vertices
of a gene tree with vertices and edges in the species tree in such a way
that (partial) ancestor relations given by the genes are preserved by
$\mu$. Vertices in the species tree correspond to speciation events. Since
in this situation genes are faithfully transmitted from the parent species
into both (all) daughter species, some of the vertices in the gene tree
correspond to speciation events. Other important events considered here are
gene duplications, in which two copies of a gene keep residing in the same
species, and horizontal gene transfer events (HGT). Here, the original
remains in the parental species, while the offspring copy ``jumps'' into a
different branch of the species tree. It is customary to define pairwise
relations between genes depending on the event type of their last common
ancestor \cite{Fitch2000,HSW:16,HW:16b}.

Most of the literature on this topic assumes that both the gene tree and
the species tree are known. The aim is then to find a mapping of the gene
tree $T$ into the species tree $S$ and, at least implicitly, an
event-labeling on the vertices of the gene tree $T$. Here we take a
different point of view and assume that $T$ and the types of evolutionary
events on $T$ are known. This setting has ample practical relevance because
event-labeled gene trees can be derived from the pairwise orthology
relation \cite{Hellmuth:15a,HW:16b}. These relations in turn can be
estimated directly from sequence data using a variety of algorithmic
approaches that are based on the pairwise best match criterion and hence do not
require any \emph{a priori} knowledge of the topology of either the gene
tree or the species tree, see e.g.\
\cite{Roth:08,Altenhoff:09,Lechner:14,Altenhoff:16}.

Genes that share a common origin (homologs) can be classified into
orthologs, paralogs, and xenologs depending whether they originated by a
speciation, duplication or horizontal gene transfer (HGT) event
\cite{Fitch2000,HW:16b}.  Recent advances in mathematical phylogenetics
\cite{HHH+13,HSW:16,Hellmuth:15a} have shown that the knowledge of these
event-relations (orthologs, paralogs and xenologs) suffices to construct
event-labeled gene trees and, in some case, also a species tree.

Conceptually, both the gene tree and species tree are associated with a
timing of each event. Reconciliation maps must preserve this timing
information because there are \emph{biologically infeasible} event
  labeled gene trees that cannot be reconciled with any species tree.  In
the absence of HGT, biologically feasibility can be characterized in terms
of certain triples (rooted binary trees on three leaves) that are displayed
by the gene trees \cite{HHH+12}. In contrast, the timing information
  must be taken into account explicitly in the presence of HGT.  In other
words, there are gene trees with HGT that can be mapped to species
  trees only in such a way that some genes travels back in time.

There have been several attempts in the literature to handle this issue,
see e.g.\ \cite{Doyon2011} for a review. In \cite{MeMi05,Ch98} a
\emph{single} HGT adds timing constraints to a time map for a
reconciliation to be found.  Time-consistency is then subsequently defined
based on the existence of a topological order of the digraph reflecting all
the time constraints. In \cite{THL:11} NP-hardness was shown for finding a
parsimonious time-consistent reconciliation based on a definition for
time-consistency that essentially is based on considering \emph{pairs} of
HGTs. This subtle modification, however, makes the definition for
time-consistency only a necessary but not sufficient definition.  Fig.\ 
\ref{fig:Mu}(left), for example, shows a biologically infeasible example
that nevertheless satisfies the conditions for time-consistency given in
\cite{THL:11}, but not those of \cite{MeMi05,Ch98}. Different algorithmic
approaches for tackling time-consistency exist \cite{Doyon2011} such as the
inclusion of time-zones known for specific evolutionary events. It is worth
noting that \emph{a posteriori} modifications of time-inconsistent
solutions will in general violate parsimony \cite{MeMi05}.

Here, we introduce an axiomatic framework for time-consistent
reconciliation maps and characterize for given event-labeled gene trees and
species trees whether there exists a time-consistent reconciliation map. We
provide an algorithm that constructs a time-consistent reconciliation map
if one exists.

\section{Notation and Preliminaries}
\label{sec:prelim}


We consider \emph{rooted trees $T=(V,E)$ (on $L_T$)} with root
$\rho_T \in V$ and leaf set $L_T\subseteq V$.  A vertex ${v}\in V$ is
called a \emph{descendant} of ${u}\in V$, ${v \preceq_T u}$, and ${u}$ is an
\emph{ancestor} of ${v}$, ${u \succeq_T v}$, if ${u}$ lies on the path from
$\rho_T$ to ${v}$. 
As usual, we write ${v \prec_T u}$ and ${u \succ_T v}$ to
mean ${v \preceq_T u}$ and $u\ne v$. The partial order $\succeq_T$ is known
as the \emph{ancestor order} of $T$; the root is the unique maximal element
w.r.t\ $\succeq_T$. If $u \preceq_T v$ or $v \preceq_T u$ then $u$ and $v$
are \emph{comparable} and otherwise, \emph{incomparable}.  We consider
edges of rooted trees to be directed away from the root, that is, the
notation for edges $(u,v)$ of a tree is chosen such that $u\succ_T v$.  If
$(u,v)$ is an edge in $T$, then $u$ is called \emph{parent} of $v$ and $v$
\emph{child} of $u$. It will be convenient for the discussion below to
extend the ancestor relation $\preceq_T$ on $V$ to the union of the edge
and vertex sets of $T$. More precisely, for the edge $e=(u,v)\in E$ we put
$x \prec_T e$ if and only if $x\preceq_T v$ and $e \prec_T x$ if and only
if $u\preceq_T x$. For edges $e=(u,v)$ and $f=(a,b)$ in $T$ we put
$e\preceq_T f$ if and only if $v \preceq_T b$. For $x\in V$, we write
$L_T(x):=\{y\in L_T \mid y\preceq_T x\}$ for the set of leaves in the
subtree $T(x)$ of $T$ rooted in $x$.

For a non-empty subset of leaves $A\subseteq L$, we define $\lca_T(A)$, or
the \emph{least common ancestor of $A$}, to be the unique
$\preceq_T$-minimal vertex of $T$ that is an ancestor of every vertex in
$A$. In case $A=\{u,v \}$, we put $\lca_T(u,v):=\lca_T(\{u,v\})$. We have
in particular $u=\lca_T(L_T(u))$ for all $u\in V$.  We will also frequently
use that for any two non-empty vertex sets $A,B$ of a tree, it holds that
$\lca(A\cup B) = \lca(\lca(A),\lca(B))$.

A \emph{phylogenetic tree} is a rooted tree such that no interior vertex in
$v\in V\setminus L_T$ has degree two, except possibly the root If $L_T$
corresponds to a \emph{set of genes} $\Gen$ or \emph{species} $\Spe$, we
call a phylogenetic tree on $L_T$ \emph{gene tree} or \emph{species tree},
respectively. In this contribution we will \textbf{not} restrict the gene
or species trees to be binary, although this assumption is made implicitly
or explicitly in much of the literature on the topic. The more general
setting allows us to model incomplete knowledge of the exact gene or
species phylogenies. Of course, all mathematical results proved here also
hold for the special case of binary phylogenetic trees.

In our setting a gene tree $T=(V,E)$ on $\Gen$ is equipped with an
\emph{event-labeling} map $t:V\cup E\to I\cup \{0,1\}$ with
$I=\{\bullet,\square,\triangle, \odot\}$ that assigns to each interior
vertex $v$ of $T$ a value $t(v)\in I$ indicating whether $v$ is a
speciation event ($\bullet$), duplication event ($\square$) or HGT event
($\triangle$). It is convenient to use the special label $\odot$ for the
leaves $x$ of $T$. Moreover, to each edge $e$ a value $t(e)\in \{0,1\}$ is
added that indicates whether $e$ is a \emph{transfer edge} ($1$) or not
($0$). Note, only edges $(x,y)$ for which $t(x)=\triangle$ might be labeled
as transfer edge. We write $\EHGT = \{e\in E\mid t(e)=1\}$ for the set of
transfer edges in $T$.  We assume here that all edges
  labeled ``$0$'' transmit the genetic material vertically, that is, from
  an ancestral species to its descendants.

We remark that the restriction $t_{|V}$ of $t$ to the vertex set $V$
coincides with the ``symbolic dating maps'' introduced in
\cite{Boeckner:98}; these have a close relationship with cographs
\cite{HHH+13,HW:15,HW:16a}.  Furthermore, there is a map $\sigma:\Gen\to
\Spe$ that assigns to each gene the species in which it resides. The set
$\sigma(M)$, $M\subseteq \Gen$, is the set of species from which the genes
$M$ are taken. We write $(T;t,\sigma)$ for the gene tree $T=(V,E)$ with
event-labeling $t$ and corresponding map $\sigma$.

Removal of the transfer edges from $(T;t,\sigma)$ yields a forest
$\Th\coloneqq (V,E\setminus \EHGT)$ that inherits the ancestor order on
its connected components, i.e., $\preceq_{\Th}$ iff $x\preceq_{T}y$ and
$x,y$ are in same subtree of $\Th$ \cite{THL:11}. Clearly
$\preceq_{\Th}$ uniquely defines a root for each subtree and the set of
descendant leaf nodes $L_{\Th}(x)$.

In order to account for duplication events that occurred before the first
speciation event, we need to add an extra vertex and an extra edge
``above'' the last common ancestor of all species in the species tree
$S=(V,E)$.  Hence, we add an additional vertex to $V$ (that is now the new
root $\rho_S$ of $S$) and the additional edge $(\rho_S,\lca_S(\Spe))$ to
$E$. Strictly speaking $S$ is not a phylogenetic tree in the usual sense,
however, it will be convenient to work with these augmented trees.  For
simplicity, we omit drawing the augmenting edge $(\rho_S,\lca_S(\Spe))$ in
our examples.

\section{Observable Scenarios} 

The true history of a gene family, as it is considered here, is an
arbitrary sequence of speciation, duplication, HGT, and gene loss
events. The applications we envision for the theory developed, here,
however assume that the gene tree and its event labels are inferred from
(sequence) data, i.e., $(T;t,\sigma)$ is restricted to those labeled trees
that can be constructed at least in principle from observable data. The
issue here are gene losses that may complete eradicate the information on
parts of the history. Specifically, we require that $(T;t,\sigma)$
satisfies the following three conditions:

\begin{description}
\item[(O1)] Every internal vertex $v$ has degree at least $3$, except
  possibly the root which has degree at least $2$.
\item[(O2)] Every HGT node has at least one transfer edge, $t(e)=1$, and at
  least one non-transfer edge, $t(e)=0$;
\item[(O3)] \emph{\textbf{(a)}} If $x$ is a speciation vertex, then there
  are at least two distinct children $v,w$ of $x$ such that the species $V$
  and $W$ that contain $v$ and $w$, resp., are incomparable in $S$.\\
  \emph{\textbf{(b)}} If $(v,w)$ is a transfer edge in $T$, then the
  species $V$ and $W$ that contain $v$ and $w$, resp., are
  incomparable in $S$.
\end{description}

Condition (O1) ensures that every event leaves a historical trace in the
sense that there are at least two children that have survived in at least
two of its subtrees. If this were not the case, no evidence would be left
for all but one descendant tree, i.e., we would have no evidence that event
$v$ ever happened. We note that this condition was used e.g.\ in
\cite{HHH+12} for scenarios without HGT.  Condition (O2) ensures that for
an HGT event a historical trace remains of both the transferred and the
non-transferred copy. If there is no transfer edge, we have no evidence to
classify $v$ as a HGT node. Conversely, if all edges were transfers, no
evidence of the lineage of origin would be available and any reasonable
inference of the gene tree from data would assume that the gene family was
vertically transmitted in at least one of the lineages in which it is
observed. In particular, Condition (O2) implies that for each internal
vertex there is a path consisting entirely of non-transfer edges to some
leaf. This excludes in particular scenarios in which a gene is transferred
to a different ``host'' and later reverts back to descendants of the
original lineage without any surviving offspring in the intermittent host
lineage.  Furthermore, a speciation vertex $x$ cannot be observed from data
if it does not ``separate'' lineages, that is, there are two leaf
descendants of distinct children of $x$ that are in distinct
species. However, here we only assume to have the weaker Condition (O3.a)
which ensures that any ``observable'' speciation vertex $x$ separates at
least locally two lineages. In other words, if all children of $x$ would be
contained in species that are comparable in $S$ or, equivalently, in the
same lineage of $S$, then there is no clear historical trace that justifies
$x$ to be a speciation vertex.  In particular, most-likely there are two
leaf descendants of distinct children of $x$ that are in the same species
even if only $\Th$ is considered. Hence, $x$ would rather be classified as
a duplication than as a speciation upon inference of the event labels from
actual data.
Analogously, if $(v,w)\in \EHGT$ then $v$ signifies the transfer event
itself but $w$ refers to the next (visible) event in the gene tree
$T$. Given that $(v,w)$ is a HGT-edge in the observable part, in a ``true
history'' $v$ is contained in a species $V$ that transmits its genetic
material (maybe along a path of transfers) to a contemporary species $Z$
that is an ancestor of the species $W$ containing $w$. Clearly, the latter
allows to have $V\succeq_S W$ which happens if the path of transfers points
back to the descendant lineage of $V$ in $S$.  In this case the transfer
edge $(v,w)$ must be placed in the species tree such that $\mu(v)$ and
$\mu(w)$ are comparable in $S$.  However, then there is no evidence that
this transfer ever happened, and thus $v$ would be rather classified as
speciation or duplication vertex.

\begin{figure}[tbp]
  \begin{center}
    \includegraphics[width=.9\textwidth]{./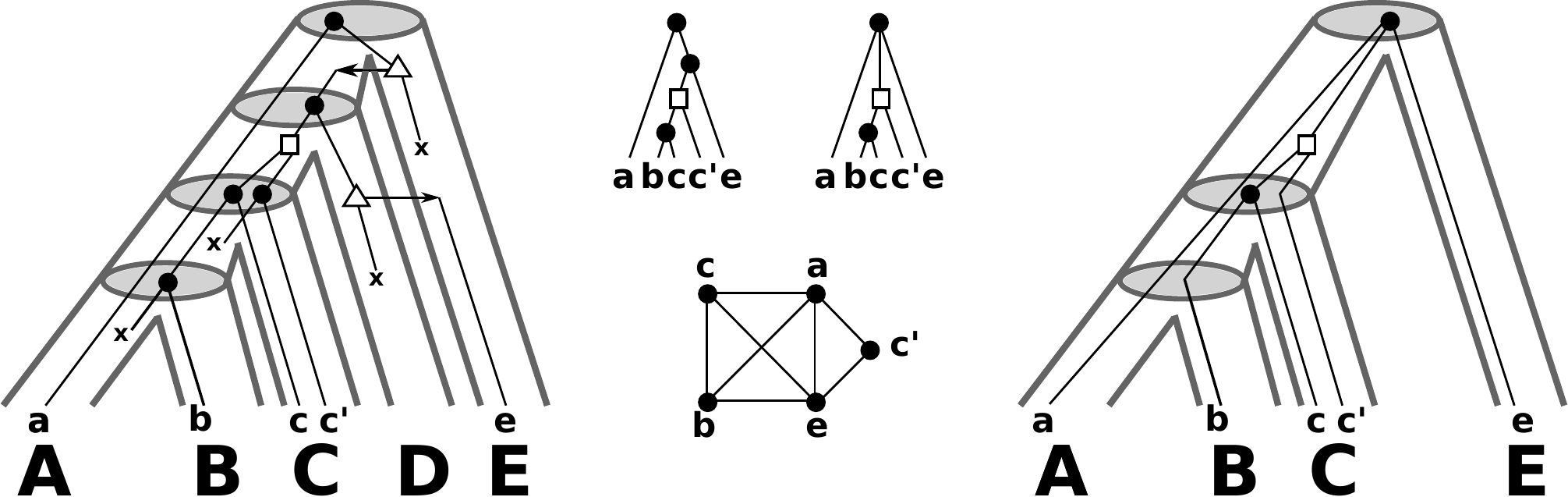}
  \end{center}
  \caption{\emph{Left:} A ``true'' evolutionary scenario for a gene tree
    with leaf set $\Gen$ evolving along the tube-like species trees is
    shown. The symbol ``x'' denotes losses.  All speciations along the path
    from the root $\rho_T$ to the leaf $a$ are followed by losses and we
    omit drawing them.  
    \newline 
    \emph{Middle:} The observable gene tree is shown in the upper-left.
    The orthology graph $G = (\Gen,E)$ (edges are placed between genes
    $x,y$ for which $t(\lca(x,y)) = \bullet$) is drawn in the lower
    part. This graph is a cograph and the corresponding \emph{non-binary}
    gene tree $T$ on $\Gen$ that can be constructed from such data is given
    in the upper-right part (cf.\ \cite{HHH+13,HSW:16,HW:16b} for further
    details). 
    \newline 
    \emph{Right:} Shown is species trees $S$ on $\Spe =\sigma(\Gen)$ with
    reconciled gene tree $T$.  The reconciliation map $\mu$ for $T$ and $S$
    is given implicitly by drawing the gene tree $T$ within $S$.  Note,
    this reconciliation is not consistent with DTL-scenarios
    \cite{THL:11,BAK:12}.  A DTL-scenario would require that the
    duplication vertex and the leaf $a$ are incomparable in $S$, see
    Appendix \ref{sec:app:dtl} for further details.}
  \label{fig:counterDTL}
\end{figure}

We show in Appendix~\ref{sec:app:obs} that (O1), (O2) and (O3) imply Lemma
\ref{lem:part_gen} as well as two important properties ($\Sigma$1) and
($\Sigma$2) of event labeled species trees that play a crucial role for
the results reported here.

\begin{lemma}
  Let $\mathcal{T}_1, \dots, \mathcal{T}_k$ be the connected components of
  $\Th$ with roots $\rho_1, \dots, \rho_k$, respectively.  If (O1), (O2) and (O3)
  holds, then, $\{L_{\Th}(\rho_1), \dots, L_{\Th}(\rho_k)\}$ forms a
  partition of $\Gen$.
  \label{lem:part_gen}
\end{lemma}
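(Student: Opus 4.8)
The plan is to reduce the claim to two elementary observations about the forest $\Th$: first, that deleting the transfer edges does not create any ``new'' leaves, so that $L_{\Th}=L_T=\Gen$; and second, that the vertex sets of the connected components $\mathcal{T}_1,\dots,\mathcal{T}_k$ partition $V$. Combining these with the definition of $L_{\Th}(\rho_i)$ should give the statement directly.

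First I would establish $L_{\Th}=\Gen$. By construction an edge $(x,y)$ can be a transfer edge only if $t(x)=\triangle$, so every tail of a transfer edge is an HGT vertex. By Condition (O2), every HGT vertex also has an incident non-transfer out-edge and hence retains at least one child in $\Th$; and any interior vertex of $T$ that is not the tail of a transfer edge keeps all of its children. Therefore no interior vertex of $T$ becomes childless in $\Th$, i.e.\ no interior vertex of $T$ is a leaf of $\Th$, so $L_{\Th}\subseteq L_T$. Conversely, every $g\in L_T$ has no children in $T$ and hence none in $\Th$, so $g\in L_{\Th}$. Thus $L_{\Th}=L_T=\Gen$.

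Next I would use that the connected components of the forest $\Th$ partition $V$; in particular $V(\mathcal{T}_1),\dots,V(\mathcal{T}_k)$ are pairwise disjoint with union $V$. Intersecting with $L_{\Th}=\Gen$ shows that $\{V(\mathcal{T}_1)\cap\Gen,\dots,V(\mathcal{T}_k)\cap\Gen\}$ is a family of pairwise disjoint subsets of $\Gen$ whose union is $\Gen$, and each class is non-empty because a finite rooted tree always contains a childless vertex (follow a maximal descending path from $\rho_i$), which by the previous step lies in $\Gen$. It then remains to identify $L_{\Th}(\rho_i)$ with $V(\mathcal{T}_i)\cap\Gen$: since $\rho_i$ is the $\preceq_{\Th}$-maximal element of $\mathcal{T}_i$, every $v\in V(\mathcal{T}_i)$ satisfies $v\preceq_{\Th}\rho_i$, while vertices outside $\mathcal{T}_i$ lie in a different component and are therefore $\preceq_{\Th}$-incomparable to $\rho_i$; hence $L_{\Th}(\rho_i)=\{g\in L_{\Th}\mid g\preceq_{\Th}\rho_i\}=V(\mathcal{T}_i)\cap\Gen$. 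This yields the asserted partition.

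I do not expect a genuine obstacle; the one point needing care is the first step, where Condition (O2) is exactly what prevents an HGT vertex all of whose out-edges are transfer edges from collapsing into a spurious leaf of $\Th$ — without (O2) the identity $L_{\Th}=\Gen$, and with it the partition claim, can fail. Conditions (O1) and (O3) are not actually required for this particular lemma; they enter only in the companion results $(\Sigma 1)$ and $(\Sigma 2)$ proved in the same appendix.
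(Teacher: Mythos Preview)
Your argument is correct and follows essentially the same route as the paper: the crux in both is that Condition~(O2) guarantees every HGT vertex retains a non-transfer out-edge, so no interior vertex of $T$ becomes a leaf of $\Th$ and hence $L_{\Th}=\Gen$. The paper's proof is terser and stops at this observation, whereas you additionally spell out the partition structure (disjointness via the component decomposition, non-emptiness via a descending path) and correctly note that (O1) and (O3) play no role here---which the paper's appendix restatement also acknowledges.
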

\begin{description}
\item[($\Sigma$1)] If $t(x)=\bullet$ then there are distinct children $v$,
  $w$ of $x$ in $T$ such that $\sigma(L_{\Th}(v))\cap \sigma(L_{\Th}(w)) =
  \emptyset$.
  \label{des:spec}
\end{description}

Intuitively, ($\Sigma$1) is true because within a component $\Th$ no
genetic material is exchanged between non-comparable nodes. Thus, a gene
separated in a speciation event necessarily ends up in distinct species in
the absence of horizontal transfer.  It is important to note that we do not
require the converse: $\sigma(L_{\Th}(y))\cap\sigma(L_{\Th}(y'))=\emptyset$
does \textbf{not} imply $t(\lca_T(L_{\Th}(y)\cup L_{\Th}(y'))=\bullet$,
that is, the last common ancestor of two sets of genes from different
species is not necessarily a speciation vertex.

Now consider a transfer edge $(v,w) \in \EHGT$, i.e., $t(v)=\triangle$.
Then $\Th(v)$ and $\Th(w)$ are subtrees of distinct connected components of
$\Th$. Since HGT amounts to the transfer of genetic material \emph{across}
distinct species, the genes $v$ and $w$ must be contained in distinct
species $X$ and $Y$, respectively. Since no genetic material is 
transferred between contemporary species $X'$ and $Y'$ in $\Th$, where
$X'$ and $Y'$ is a descendant of $X$ and $Y$, respectively we derive  

\begin{description}
\item[($\Sigma$2)] If $(v,w) \in \EHGT$ then $\sigma(L_{\Th}(v))\cap
  \sigma(L_{\Th}(w)) = \emptyset$.
  \label{des:HGT}
\end{description}

A more formal proof for Lemma \ref{lem:part_gen}, ($\Sigma$1) and ($\Sigma$2) is given in Appendix~\ref{sec:app:obs}.  From here on
we simplify the notation a bit and write $\sT(u):=\sigma(L_{\Th}(u))$.

\section{Time-Consistent Reconciliation Maps}

The problem of reconciliation between gene trees and species tree is
  formalized in terms of so-called DTL-scenarios in the literature
  \cite{THL:11,BAK:12}. This framework, however, usually assumes that the
  event labels $t$ on $T$ are unknown, while a species tree $S$ is
  given. The ``usual'' DTL axioms, furthermore, explicitly refer to binary,
  fully resolved gene and species trees. We therefore use a different axiom
  set here that is a natural generalization of the framework introduced in
  \cite{HHH+12} for the HGT-free case:
\begin{definition} \label{def:mu} 
  Let $T=(V,E)$ and $S=(W,F)$ be phylogenetic trees on $\Gen$ and 
  $\Spe$, 
  resp., $\sigma:\Gen\to\Spe$ the assignment of genes to species and 
  $t:V\cup E\to \{\bullet,\square,\triangle,\odot\} \cup \{0,1\}$ an event
  labeling on $T$. A map $\mu:V\to W\cup F$ is a \emph{reconciliation map} 
  if for all $v\in V$ it holds that: 
  \begin{description}
  \item[(M1)] \emph{Leaf Constraint.}  If $t(v)=\odot$, then
    $\mu(v)=\sigma(v)$.
  \item[(M2)] \emph{Event Constraint.}
    \begin{itemize}
    \item[(i)] If $t(v)=\bullet$, then $\mu(v) = \lca_S(\sT(v))$.
    \item[(ii)] If $t(v) \in \{\square, \triangle\}$, then $\mu(v)\in F$.
    \item[(iii)] If $t(v)=\triangle$ and $(v,w)\in \EHGT$, then $\mu(v)$
      and $\mu(w)$ are incomparable in $S$.
    \end{itemize}
  \item[(M3)] \emph{Ancestor Constraint.}		\\
    Suppose $v,w\in V$ with $v\prec_{\Th} w$.  
    \begin{itemize}
    \item[(i)] If $t(v),t(w)\in \{\square, \triangle\}$, then
      $\mu(v)\preceq_S \mu(w)$,
    \item[(ii)] otherwise, i.e., at least one of $t(v)$ and $t(w)$ is a
      speciation $\bullet$, $\mu(v)\prec_S\mu(w)$.
    \end{itemize}
  \end{description}
  We say that $S$ is a \emph{species tree for $(T;t,\sigma)$} if a
  reconciliation map $\mu:V\to W\cup F$ exists.
\end{definition}                
In Appendix~\ref{sec:app:dtl} we provide a translation of the DTL axioms to
  the notation used here and show that, for binary trees, they are
  equivalent to Definition~\ref{def:mu}. In Figure \ref{fig:counterDTL}
   an example of a biologically plausible reconciliation of
  non-binary trees that is valid w.r.t.\ Definition~\ref{def:mu} is shown, however,  
	it does not satisfy the conditions of a DTL-scenario.

\begin{figure}[tbp]
  \begin{center}
    \includegraphics[width=0.85\textwidth]{./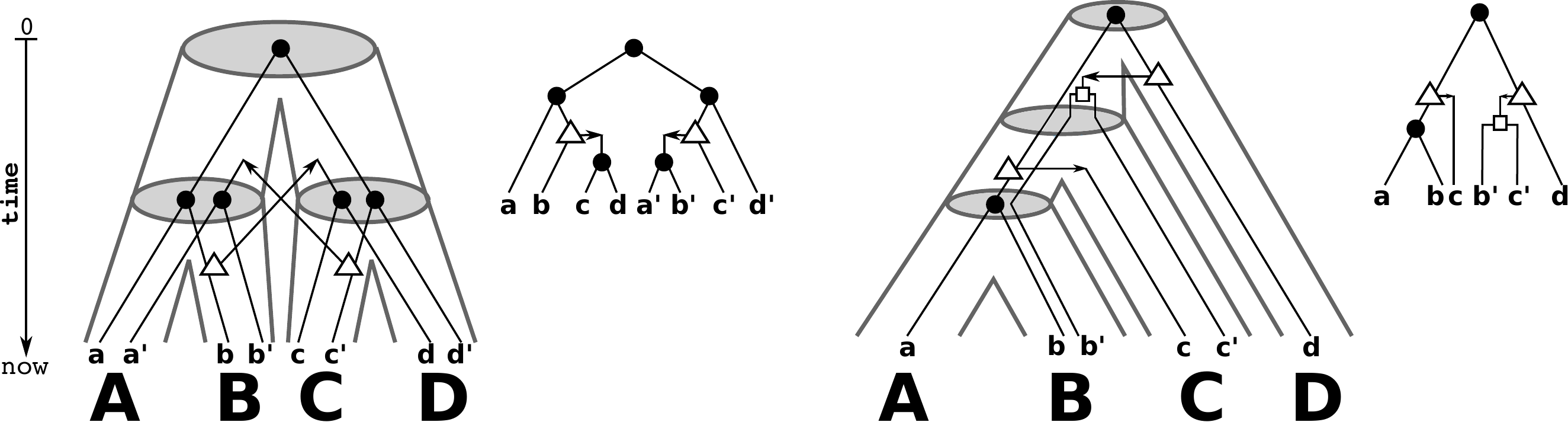}
  \end{center}
  \caption{Shown are two (tube-like) species trees with reconciled gene trees.  The
    reconciliation map $\mu$ for $T$ and $S$ is given implicitly by drawing
    the gene tree (upper right to the respective species tree) within the
    species tree.  \newline 
	In the left example, the map $\mu$ is unique.  However, although
    $\mu$ satisfies the conditions of Def.\ \ref{def:mu}, it is not
    time-consistent. 
	This example, however, is time-consistent 
	by the definition in \cite{THL:11}. Thus, 
 	time-consistency as given in
	\cite{THL:11} is only necessary but not sufficient. 
		In the right example, $\mu$ is time-consistent. }
	\label{fig:Mu}
\end{figure}

Condition (M1) ensures that each leaf of $T$, i.e., an extant gene in
$\Gen$, is mapped to the species in which it resides. Conditions (M2.i) and
(M2.ii) ensure that each inner vertex of $T$ is either mapped to a vertex
or an edge in $S$ such that a vertex of $T$ is mapped to an interior vertex
of $S$ if and only if it is a speciation vertex.  Condition (M2.i) might
seem overly restrictive, an issue to which we will return below.  Condition
(M2.iii) satisfies condition (O3) and 
maps the vertices of a transfer edge in a way that they are
incomparable in the species tree, since a HGT occurs between distinct
(co-existing) species. 
It becomes void in the absence of HGT; thus Definition \ref{def:mu} reduces
to the definition of reconciliation maps given in \cite{HHH+12} for the
HGT-free case.  Importantly, condition (M3) refers only to the connected
components of $\Th$ since comparability w.r.t.\ $\prec_{\Th}$ implies that
the path between $x$ and $y$ in $T$ does not contain transfer edges. It
ensures that the ancestor order $\preceq_T$ of $T$ is preserved along all
paths that do not contain transfer edges.

We will make use of the following bound that effectively restricts how
close to the leafs the image of a vertex in the gene tree can be located.  
\begin{lemma}
  If $\mu: (T;t,\sigma)\to S$ satisfies (M1) and (M3), then
  $\mu(u)\succeq_S \lca_S(\sT(u))$ for any $u\in V(T)$.
\label{lem:cond-mu2}
\end{lemma}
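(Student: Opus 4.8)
The plan is to fix a vertex $u \in V(T)$ and show that $\mu(u) \succeq_S \lca_S(\sT(u))$ by relating $\mu(u)$ to the images of the leaves that lie below $u$ inside the component of $\Th$ containing $u$. Recall $\sT(u) = \sigma(L_{\Th}(u))$, the set of species hosting the genes in $L_{\Th}(u)$. The key observation is that every leaf $x \in L_{\Th}(u)$ satisfies $x \preceq_{\Th} u$, i.e. $x$ and $u$ are comparable in $T$ via a path free of transfer edges, so (M3) applies directly to the pair $(x,u)$.

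First I would take an arbitrary leaf $x \in L_{\Th}(u)$. If $x = u$ (the case $u$ itself a leaf), then by (M1) $\mu(u) = \sigma(u)$ and $\sT(u) = \{\sigma(u)\}$, so $\lca_S(\sT(u)) = \sigma(u) = \mu(u)$ and the claim holds with equality. Otherwise $x \prec_{\Th} u$, and (M3) — in either case (i) or (ii) — gives $\mu(x) \preceq_S \mu(u)$. By (M1), $\mu(x) = \sigma(x)$, so $\sigma(x) \preceq_S \mu(u)$. Since this holds for every $x \in L_{\Th}(u)$, the vertex $\mu(u)$ is an ancestor (in the extended order $\preceq_S$ on $W \cup F$) of every element of $\{\sigma(x) : x \in L_{\Th}(u)\} = \sT(u)$. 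By definition of the least common ancestor as the $\preceq_S$-minimal such ancestor, we conclude $\mu(u) \succeq_S \lca_S(\sT(u))$.

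The only mild subtlety — and the step I would be most careful about — is that $\mu(u)$ may be an \emph{edge} of $S$ rather than a vertex (when $t(u) \in \{\square,\triangle\}$), so one is really working with the ancestor order $\preceq_S$ extended to $W \cup F$ as defined in the preliminaries. I would note that $\lca_S(\sT(u))$ is a vertex, that the extended order restricted to pairs (vertex, edge-or-vertex) behaves as expected, and that "$\mu(u)$ is a common ancestor of all $\sigma(x)$" together with minimality of the lca among \emph{vertex} ancestors still forces $\mu(u) \succeq_S \lca_S(\sT(u))$, because any element of $W \cup F$ lying above all of $\sT(u)$ lies above their lca. Everything else is a direct unwinding of definitions; no structural property of $\Th$ beyond "$x \preceq_{\Th} u \Rightarrow x \preceq_T u$ with a transfer-free connecting path" is needed, and in particular Lemma~\ref{lem:part_gen} and $(\Sigma 1)$, $(\Sigma 2)$ are not required here.
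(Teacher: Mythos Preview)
Your proof is correct and follows essentially the same route as the paper: show that every leaf $x\in L_{\Th}(u)$ satisfies $\sigma(x)=\mu(x)\preceq_S\mu(u)$ via (M1) and (M3), so $\mu(u)$ lies above all of $\sT(u)$ and hence above $\lca_S(\sT(u))$. The paper phrases the final step as a short contradiction (if $\mu(u)\prec_S\lca_S(\sT(u))$ or they are incomparable, some $z\in\sT(u)$ would be incomparable to $\mu(u)$), while you invoke the minimality of the lca directly and are more explicit about the extended order on $W\cup F$; these are cosmetic differences only.
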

\begin{proof}
  If $u$ is a leaf, then by Condition (M1) $\mu(u)=\sigma(u)$ and we are
  done. Thus, let $u$ be an interior vertex. By Condition (M3), $z
  \preceq_S\mu(u)$ for all $z\in \sT(u)$. Hence, if $\mu(u)\prec_S
  \lca_S(\sT(u))$ or if $\mu(u)$ and $\lca_S(\sT(u)))$ are incomparable in
  $S$, then there is a $z\ \in \sT(u)$ such that $z$ and $\mu(u)$ are
  incomparable; contradicting (M3).
\end{proof}
Condition (M2.i) implies in particular the weaker property ``(M2.i') if
$t(v)=\bullet$ then $\mu(v)\in W$''. In the light of
Lemma~\ref{lem:cond-mu2}, $\mu(v)=\lca_S(\sT(v))$ is the lowest possible
choice for the image of a speciation vertex.  Clearly, this restricts the
possibly exponentially many reconciliation maps for which
$\mu(v)\succ_S\lca_S(\sT(v))$ for speciation vertices $v$ is allowed to
only those that satisfy (M2.i). However, the latter is justified by the
observation that if $v$ is a speciation vertex with children $u,w$, then
there is only one unique piece of information given by the gene tree to
place $\mu(v)$, that is, the unique vertex $x$ in $S$ with children $y,z$
such that $\sT(u) \subseteq L_S(y)$ and $\sT(w) \subseteq L_S(z)$.  The
latter arguments easily generalizes to the case that $v$ has more than two
children in $T$.  Moreover, any \emph{observable} speciation node
$v'\succ_T v$ closer to the root than $v$ must be mapped to a node
ancestral to $\mu(v)$ due to (M3.ii). Therefore, we require 
$\mu(v) =x = \lca_S(\sT(v))$ here. 

If $S$ is a species tree for the gene tree $(T,t,\sigma)$ then there is no
freedom in the construction of a reconciliation map $\mu$ on the set
$\{x\in V(T)\mid t(x)\in\{\bullet, \odot\}\}$.  The duplication and HGT
vertices of $T$, however, can be placed differently. As a consequence there
is a possibly exponentially large set of reconciliation maps from
$(T,t,\sigma)$ to $S$. 

From a biological point of view, however, the notion of reconciliation used
so far is too weak. In the absence of HGT, subtrees evolve independently and
hence, the linear order of points along each path from root to leaf is
consistent with a global time axis. This is no longer true in the presence
of HGT events, because HGT events imply additional time-consistency
conditions. These stem from the fact that the appearance of the HGT copy in
a distant subtree of $S$ is concurrent with the HGT event. To investigate
this issue in detail, we introduce time maps and the notion of
time-consistency, see Figures \ref{fig:Mu} -- \ref{fig:nonU-Mu2}
for illustrative examples.

\begin{figure}[tbp]
  \begin{center}
    \includegraphics[width=0.85\textwidth]{./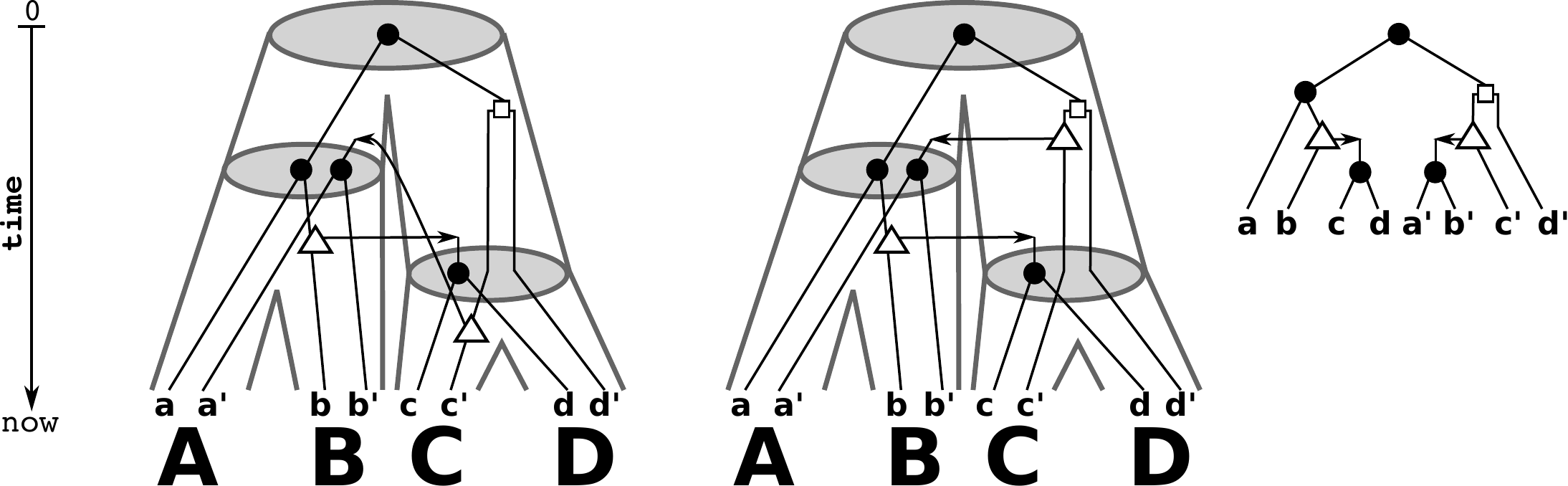}
  \end{center}
  \caption{Shown are a gene tree $(T;t,\sigma)$  (right)
    and two identical  (tube-like)  species trees $S$ (left and middle).
    There are two possible reconciliation maps for $T$ and $S$ that are
    given implicitly by drawing $T$ within the species tree $S$. These two
    reconciliation maps differ only in the choice of placing the HGT-event
    either on the edge $(\lca_S(C,D), C)$ or on the edge
    $(\lca_S(\{A,B,C,D\}), \lca_S(C,D))$. In the first case, it is easy to
    see that $\mu$ would not be time-consistent, i.e., there are no time
    maps $\tT$ and $\ts$ that satisfy (C1) and (C2).  The reconciliation
    map $\mu$ shown in the middle is time-consistent.  }
  \label{fig:nonU-Mu}
\end{figure}

\begin{definition}[Time Map]\label{def:time-map}
  The map $\tT: V(T) \rightarrow \mathbb{R}$ is a time map for the 
  rooted tree $T$ if $x\prec_T y$ implies $\tT(x)>\tT(y)$ for all 
  $x,y\in V(T)$. 
\end{definition}

\begin{definition} \label{def:tc-mu} A reconciliation map $\mu$ from
  $(T;t,\sigma)$ to $S$ is \emph{time-consistent} if there are time maps
  $\tau_T$ for $T$ and $\tau_S$ for $S$ for all $u\in V(T)$ satisfying the
  following conditions:
  \begin{description}
  \item[(C1)] If $t(u) \in \{\bullet, \odot \}$, then 
    $\tau_T(u) = \tau_S(\mu(u))$. \label{bio1}
  \item[(C2)] If $t(u)\in \{\square,\triangle \}$ and, thus
    $\mu(u)=(x,y)\in E(S)$, \label{bio2} then
    $\tau_S(y)>\tau_T(u)>\tau_S(x)$. 
\end{description}
\end{definition} 

Condition (C1) is used to identify the time-points of speciation vertices 
and leaves $u$ in the gene tree with the time-points of their respective 
images $\mu(u)$ in the species trees. In particular, all genes $u$ that
reside in the same species must be assigned the same time point
$\tau_T(u)=\tau_S(\sigma(u))$. Analogously, all speciation vertices in 
$T$ that are mapped to the same speciation in $S$ are assigned matching
time stamps, i.e., if $t(u)=t(v)=\bullet$ and $\mu(u)=\mu(v)$ then 
$\tau_T(u)=\tau_T(v)=\tau_S(\mu(u))$. 

To understand the intuition behind (C2) consider a duplication or HGT
vertex $u$. By construction of $\mu$ it is mapped to an edge of $S$, i.e.,
$\mu(u)=(x,y)$ in $S$. The time point of $u$ must thus lie between time
points of $x$ and $y$. Now suppose $(u,v)\in \EHGT$ is a transfer edge. By
construction, $u$ signifies the transfer event itself. The node $v$,
however, refers to the next (visible) event in the gene tree. Thus
$\tau_T(u)<\tau_T(v)$. In particular, $\tau_T(v)$ must not be
misinterpreted as the time of introducing the HGT-duplicate into the 
new lineage. While this time of course exists (and in our model coincides
with the timing of the transfer event) it is not marked by a visible event
in the new lineage, and hence there is no corresponding node in the gene
tree $T$. 

W.l.o.g.\ we fix the time axis so that $\tT(\rho_T) = 0$ and
$\ts(\rho_S) = -1$.  Thus, $\ts(\rho_S)< \tT(\rho_T) < \tT(u)$ for
all $u\in V(T)\setminus\{\rho_T\}$. 


Clearly, a necessary condition to have biologically feasible gene trees is
the existence of a reconciliation map $\mu$. 
However, not all
reconciliation maps are time-consistent, see Fig.~\ref{fig:Mu}. 

\begin{definition}
  An event-labeled gene tree $(T;t,\sigma)$ is \emph{biologically feasible} if
  there exists a time-consistent reconciliation map from $(T;t,\sigma)$ to
  some species tree $S$.
\end{definition}                

As a main result of this contribution, we provide simple conditions that
characterize (the existence of) time-consistent reconciliation maps and
thus, provides a first step towards the characterization of biologically
feasible gene trees.

\begin{figure}[tbp]
  \begin{center}
    \includegraphics[width=0.85\textwidth]{./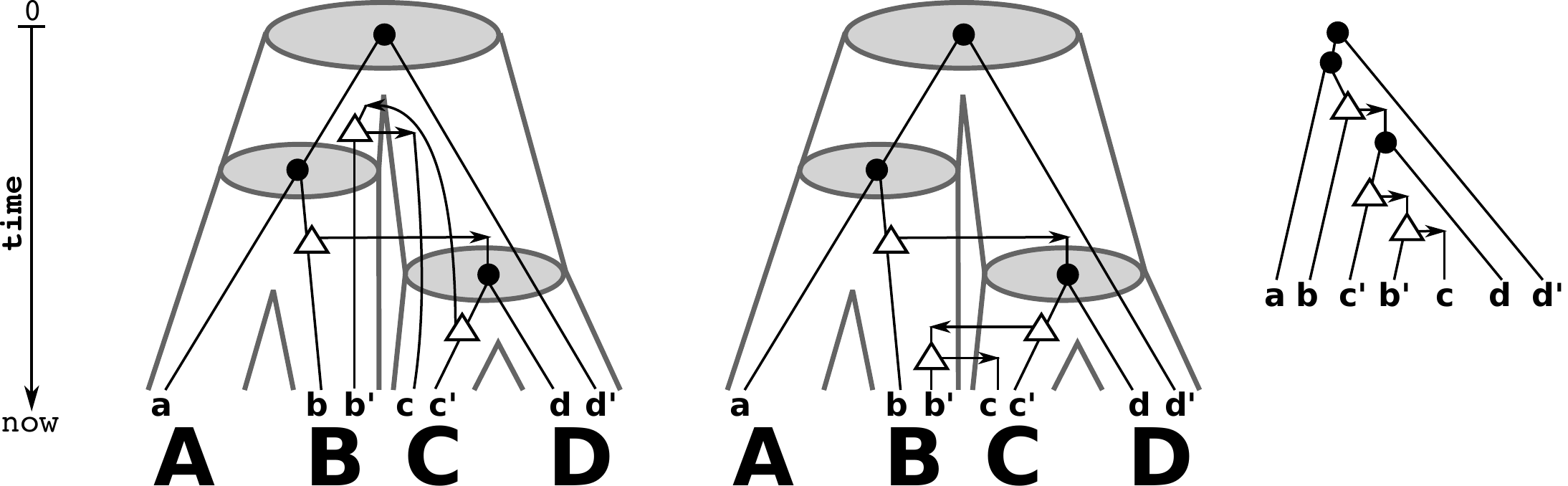}
  \end{center}
  \caption{Shown are a gene tree $(T;t,\sigma)$ (right) and two identical
    (tube-like) species trees $S$ (left and middle).  There are two
    possible reconciliation maps for $T$ and $S$ that are given implicitly
    by drawing $T$ within the species tree $S$.  The left reconciliation
    maps each gene tree vertex as high as possible into the species tree.
    However, in this case only the middle reconciliation map is
    time-consistent.  }
  \label{fig:nonU-Mu2}
\end{figure}
 
\begin{theorem} \label{thm:D} Let $\mu$ be a reconciliation map from
  $(T;t,\sigma)$ to $S$.  There is a \emph{time-consistent} reconciliation
  map from $(T;t,\sigma)$ to $S$ if and only if there are two time-maps
  $\tT$ and $\ts$ for $T$ and $S$, respectively, such that the following
  conditions are satisfied for all $x\in V(S)$:
  \begin{description}
  \item[(D1)] If $\mu(u) = x$, for some $u\in V(T)$ then $\tau_T(u) =
    \tau_S(x)$.
  \item[(D2)] If $x \preceq_S \lca_S(\sT(u))$ for some $u \in V(T)$ with
    $t(u)\in \{\dpl, \hgt\}$, then $\tau_S(x) > \tau_T(u)$.
  \item[(D3)] If $\lca_S(\sT(u)\cup\sT(v)) \preceq_S x$ for some $(u, v)
    \in \mc{E}$, then $\tau_T(u) > \tau_S(x)$.
  \end{description}
\end{theorem}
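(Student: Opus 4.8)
The plan is to prove both directions by relating the conditions (C1), (C2) of time-consistency to the conditions (D1)--(D3). The key observation driving the ``if'' direction is that (D1)--(D3) are weaker/cleaner than (C1)--(C2): they only constrain time maps in terms of where $\mu$ sends vertices and in terms of $\lca_S(\sT(u))$, not in terms of the precise edge $\mu(u)=(x,y)$. So I would first show the easy direction: if $\mu$ itself is time-consistent via time maps $\tT,\ts$, then these same time maps satisfy (D1)--(D3). Indeed, (D1) is immediate from (C1). For (D2), if $x\preceq_S\lca_S(\sT(u))$ with $t(u)\in\{\dpl,\hgt\}$, then by Lemma~\ref{lem:cond-mu2} we have $\mu(u)\succeq_S\lca_S(\sT(u))\succeq_S x$, and since $\mu(u)=(a,b)\in E(S)$ is an edge with $b\preceq_S\lca_S(\sT(u))$... here I must be slightly careful: $\mu(u)\succeq_S\lca_S(\sT(u))$ means $b\succeq_S\lca_S(\sT(u))$ in the extended order, so $x\preceq_S\lca_S(\sT(u))\preceq_S b$, hence $\ts(x)\ge\ts(b)>\tT(u)$ by (C2) and monotonicity of $\ts$, with equality $x=b$ handled since then $\ts(b)>\tT(u)$ directly. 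For (D3): if $(u,v)\in\EHGT$ and $\lca_S(\sT(u)\cup\sT(v))\preceq_S x$, then $\mu(u)\succeq_S\lca_S(\sT(u))$ and $\mu(v)\succeq_S\lca_S(\sT(v))$ by Lemma~\ref{lem:cond-mu2}; but $u\prec_T v$ in $T$ so $\tT(u)>\tT(v)$, and I can use (C2) on $\mu(u)=(a,b)$ to get $\tT(u)>\ts(a)$, then relate $a$ to $x$ --- I expect I need $a\succeq_S\lca_S(\sT(u)\cup\sT(v))$ or rather that $\mu(u)$, being incomparable to $\mu(v)$ (by (M2.iii)) and above $\lca_S(\sT(u))$, forces its tail $a$ to be an ancestor of $\lca_S(\sT(u)\cup\sT(v))$, hence of $x$; combined with $\tT(u)>\ts(a)\ge\ts(\text{anything }\preceq_S a)$... wait, monotonicity gives $\ts(a)>\ts(x)$ when $a\succ_S x$, and I finish with $\tT(u)>\ts(a)\ge\ts(x)$.

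For the converse --- the substantive direction --- I would \emph{construct} a time-consistent reconciliation map $\mu'$ (not necessarily equal to $\mu$) from the given time maps $\tT,\ts$ satisfying (D1)--(D3). On speciation vertices and leaves I am forced by (M1), (M2.i) to set $\mu'(v)=\mu(v)$ (note $\mu$ and $\mu'$ agree there), and (C1) then holds by (D1). The work is placing each duplication/HGT vertex $u$ on a suitable edge of $S$: I need an edge $(x,y)\in E(S)$ with $\ts(y)>\tT(u)>\ts(x)$ such that the resulting map still satisfies (M2.ii), (M2.iii), (M3). The natural candidate: walk up the path in $S$ from $\lca_S(\sT(u))$ toward $\rho_S$ and pick the unique edge $(x,y)$ whose time interval $(\ts(x),\ts(y))$ contains $\tT(u)$; such an edge exists because (D2) guarantees $\ts$ of everything at or below $\lca_S(\sT(u))$ exceeds $\tT(u)$, while $\ts(\rho_S)=-1<0\le\tT(u)$ sits below, so the value $\tT(u)$ is straddled somewhere along that root-path. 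Then (M3) for vertical (non-transfer) ancestor pairs follows from monotonicity of $\tT$ together with (C1)/(C2) exactly as in the HGT-free theory, and (M3.i) for comparable duplication/HGT pairs follows similarly since their images sit on edges whose intervals are nested by the time values. The delicate points are (M2.iii) --- that for a transfer edge $(u,v)$ the chosen images $\mu'(u),\mu'(v)$ are incomparable in $S$ --- and (M3) across the boundary where $\Th$-comparable pairs straddle different components; for (M2.iii) I would invoke (D3): since $\lca_S(\sT(u)\cup\sT(v))$ is a common ancestor of both $\sT(u)$ and $\sT(v)$ and (D3) forces $\tT(u)>\ts(\lca_S(\sT(u)\cup\sT(v)))$, the edge selected for $u$ lies strictly below $\lca_S(\sT(u)\cup\sT(v))$ on the $\sT(u)$-side, and symmetrically (using $\tT(v)<\tT(u)$, so also $\tT(v)>$? no --- here I'd instead argue $\mu'(v)$ lies below $\lca_S(\sT(u)\cup\sT(v))$ on the $\sT(v)$-branch by the same straddling construction applied to $v$), so the two images lie in different subtrees hanging off $\lca_S(\sT(u)\cup\sT(v))$ and are incomparable.

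The main obstacle I anticipate is twofold. First, verifying (M3) in full generality when the two gene-tree vertices are $\Th$-comparable but one or both are speciation vertices: I must rule out $\mu'(v)=\mu'(w)$ with a strict inequality required, and this needs the strictness in (D2)/(D3) together with a careful case analysis of which of $v,w$ is a speciation vertex, leaning on property ($\Sigma$1) (distinct children of a speciation vertex map into disjoint species-subtrees) to separate the $\lca$ images strictly. Second, showing that the ``straddling edge'' is genuinely well-defined and unique --- this amounts to checking that the $\ts$-values along any root-to-leaf path in $S$ are strictly decreasing (true since $\ts$ is a time map) and that $\tT(u)$ is not equal to $\ts(z)$ for any vertex $z$ on the relevant path, which may require a preliminary perturbation argument (replace $\ts$ by a generic nearby time map, or argue the inequalities in (D1)--(D3) are preserved under small perturbation) so that strict inequalities in (C2) can be met; I would phrase this as a short lemma that any time maps satisfying (D1)--(D3) can be replaced by ones satisfying them with the additional genericity that no $\tT$-value of a non-speciation vertex coincides with a $\ts$-value of a species vertex. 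With that in hand the construction goes through and yields the desired time-consistent $\mu'$.
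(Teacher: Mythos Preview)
Your overall strategy matches the paper's proof of Theorem~\ref{thm:D} essentially step for step: derive (D1)--(D3) from (C1)--(C2) via Lemma~\ref{lem:cond-mu2} in the forward direction, and in the converse build $\mu'$ by keeping $\mu$ on speciation/leaf vertices and, for each $u$ with $t(u)\in\{\dpl,\hgt\}$, selecting the unique edge on the path from $\lca_S(\sT(u))$ to $\rho_S$ whose $\ts$-interval straddles $\tT(u)$, after a small perturbation of $\tT$ to avoid ties. The paper carries this out exactly, including the perturbation step you anticipate.

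Two points to clean up. First, your forward (D3) sketch has the orientations reversed. For a transfer edge $(u,v)\in\EHGT$ one has $v\prec_T u$, hence $\tT(v)>\tT(u)$, not the other way round; and with $\mu(u)=(a,b)$ the incomparability of $\mu(u)$ and $\mu(v)$ together with $\mu(u)\succeq_S\lca_S(\sT(u))$, $\mu(v)\succeq_S\lca_S(\sT(v))$ forces $a\preceq_S z:=\lca_S(\sT(u)\cup\sT(v))$ (the parent of the edge is a \emph{descendant} of $z$, not an ancestor). Since $z\preceq_S x$ you then get $\ts(a)\geq\ts(z)\geq\ts(x)$ from the time-map convention that descendants carry larger $\tau$-values, and (C2) gives $\tT(u)>\ts(a)\geq\ts(x)$. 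Your final inequality is right, but the route you wrote is internally inconsistent.

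Second, you will not need $(\Sigma1)$ to verify the strict inequality in (M3.ii) for $\mu'$. The paper dispatches this ``as an easy task'': for $v\prec_{\Th} w$ one has $\sT(v)\subseteq\sT(w)$, hence $\lca_S(\sT(v))\preceq_S\lca_S(\sT(w))$, and since $\mu'(w)$ (when $t(w)\in\{\dpl,\hgt\}$) is placed on an edge $(x_w,y_w)$ with $y_w\succeq_S\lca_S(\sT(w))$, the required strict comparability drops out directly from the edge/vertex distinction and the construction, with no appeal to $(\Sigma1)$.
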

\begin{proof}
  See Appendix \ref{sec:thm:D}.
\end{proof}

Interestingly, the existence of a time-consistent
reconciliation map from a gene tree $T$ to a species tree $S$ can be
characterized in terms of a time map defined on $T$, only (see Theorem 
\ref{thm:onlyTT} in Appendix \ref{sec:thm:onlyTT}).

From the algorithmic point of view it is desirable to design methods that
allow to check whether a reconciliation map is time-consistent.  Moreover,
given a gene tree $T$ and species tree $S$ we wish to decide whether there
exists a time-consistent reconciliation map $\mu$, and if so, we should be
able to construct $\mu$.

To this end, observe that any constraints given by Definition
\ref{def:time-map}, Theorem \ref{thm:D} (D2)-(D3), and Definition
\ref{def:tc-mu} (C2) can be expressed as a total order on $V(S)\cup V(T)$,
while the constraints (C1) and (D1) together suggest that we can treat the
preimage of any vertex in the species tree as a ``single vertex''. In fact
we can create an auxiliary graph in order to answer questions that are
concerned with time-consistent reconciliation maps.

\begin{definition}\label{def:auxG}
  Let $\mu$ be a reconciliation map from $(T;t,\sigma)$ to $S$.  The
  \emph{auxiliary graph} $A$ is defined as a directed graph with a vertex
  set $V(A) = V(S) \cup V(T)$ and an edge-set $E(A)$ that is
    constructed as follows
  \begin{description}
  \item[(A1)] For each $(u,v)\in E(T)$ we have $(u',v') \in E(A)$, where
    \begin{equation*}
      u' =
      \begin{cases}
        \mu(u) & \text{if } t(u) \in \{\lea, \spe \} \\
        u & \text{otherwise}
      \end{cases}
      ,\ v' =
      \begin{cases}
        \mu(v) & \text{if } t(v)\in \{ \lea, \spe \} \\
        v & \text{otherwise}
      \end{cases},
    \end{equation*}
  \item[(A2)] For each $(x,y)\in E(S)$ we have $(x,y) \in E(A)$.
  \item[(A3)] For each $u \in V(T)$ with $t(u) \in \{ \dpl, \hgt \}$ we
    have $(u, lca_S(\sT(u))) \in E(A)$.
  \item[(A4)] For each $(u,v) \in \mc{E}$ we have $(lca_S(\sT(u)\cup
    \sT(v)),u) \in E(A)$
  \item[(A5)] For each $u\in V(T)$ with $t(u) \in \{ \hgt, \dpl \}$ and
    $\mu(u) = (x,y) \in E(S)$ we have $(x,u)\in E(A)$ and $(u,y)\in E(A)$.
  \end{description}
  We define $A_1$ and $A_2$ as the subgraphs of $A$ that contain only
    the edges defined by (A1), (A2), (A5) and (A1), (A2), (A3), (A4),
    respectively.
\end{definition}
We note that the edge sets defined by conditions (A1) through (A5) are not
necessarily disjoint. The mapping of vertices in $T$ to edges in $S$ is
considered only in condition (A5). The following two theorems proved in
Appendices \ref{sec:thm:aux-C} and \ref{sec:thm:aux} are the key results of
this contribution.

\begin{algorithm}[t]
  \caption{Check if there is a time-consistent reconciliation map
  } \label{alg:all}
  \begin{algorithmic}[1]
    \Require{$S=(W,F)$ is a species tree for $T=(V,E)$. }
    \Let{$\ell$}{$\texttt{ComputeLcaSigma}((T;t,\sigma),
      S)$} \label{alg2:lca} \State{$\mu(u) \gets \emptyset$ for all $u\in
      V$} \Comment{``$\emptyset$'' means uninitialized} \label{alg2:mu}
    \For{all $u\in V$} \label{alg:all:for1} \If{$t(u) \in \{\spe, \lea \}$}
    $\mu(u) \gets \ell(u)$ 
    \Else $\ \mu(u)\gets (p(\ell(u)), \ell(u))$ \Comment{$p(\ell(u))$
      denotes the parent of $\ell(u)$} \label{alg:all:edge}
    \EndIf
    \EndFor \label{alg:all:for2}
    \State Compute the auxiliary graph $A_2$ \label{alg:all:A} \If{$A_2$
      contains a cycle} \label{alg:all:cycle} \Return \textit{``No
      time-consistent reconciliation map exists.''}
    \EndIf
    \State{Let $\tau: V(A_2) \rightarrow \mb{R}$ such that if $(x,y) \in
      E(A_2)$ then $\tau(x) < \tau(y)$} \label{alg:all:constt1} \State
    \Comment{W.l.o.g.\ we can assume that $\tau(x) \neq \tau(y)$ for all
      $x,y\in V(A_2)$} \label{alg:all:comment} \Let{$\tau_S$}{A time map
      such that $\tau_S(x) = \tau(x)$ for all $x \in
      W$} \label{alg:all:constt2} \Let{$\tau_T$}{A time map such that
      $\tau_T(u) = \tau(\mu(u))$ if $t(u) \label{alg:all:constt3} \in \{
      \spe, \lea \}$, otherwise $\tau_T(u) = \tau(u)$ for all $u \in
      V$.} \label{alg:all:constt4}
    \For{$u\in V$ where $t(u) \in \{ \dpl, \hgt \}$} \label{alg:all:forx}
    \While{it does not hold that $\tau_S(x) < \tau_T(u) < \tau_S(y)$ for
      $(x,y) = \mu(u)$} \Let{$\mu(u)$}{$(p(x),x)$} \label{alg:all:mu}
    \EndWhile
    \EndFor
    \State \Return $\mu$
  \end{algorithmic}
\end{algorithm}

\begin{theorem}\label{thm:aux-C}
  Let $\mu$ be a reconciliation map from $(T;t,\sigma)$ to $S$.  The map
  $\mu$ is time-consistent if and only if the auxiliary graph $A_1$ is a
  directed acyclic graph (DAG).
	\label{thm:dag125}
\end{theorem}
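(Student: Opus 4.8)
The plan is to prove the two directions of the equivalence separately, using the characterization of time-consistency already established in Theorem~\ref{thm:D} (which reduces time-consistency to the existence of time maps $\tau_T,\tau_S$ satisfying (D1)--(D3)) as the bridge between the semantic notion (existence of valid $\tau_T,\tau_S$) and the combinatorial object $A_1$.

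\medskip
\noindent\textbf{($\Rightarrow$) From time-consistency to acyclicity.}
Suppose $\mu$ is time-consistent. By Definition~\ref{def:tc-mu} there are time maps $\tau_T$ for $T$ and $\tau_S$ for $S$ satisfying (C1) and (C2); by adjusting (or directly invoking Theorem~\ref{thm:D}) we may also assume (D1) holds. I would define a single real-valued function $\tau^\ast$ on $V(A_1) = V(S)\cup V(T)$ by $\tau^\ast(x) = \tau_S(x)$ for $x\in V(S)$ and $\tau^\ast(u) = \tau_T(u)$ for $u\in V(T)$ — noting that (C1)/(D1) guarantee this is well-defined in spirit (a gene-tree speciation/leaf vertex $u$ and its image $\mu(u)\in V(S)$ receive equal values, so no conflict arises; more carefully, one can even contract each such preimage class, but it is cleaner just to keep both values and observe they agree). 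Then I check that every edge of $A_1$ is strictly increasing under $\tau^\ast$: edges from (A2) increase because $\tau_S$ is a time map on $S$; edges from (A1) increase because $\tau_T$ is a time map on $T$ together with (C1)/(D1) translating the endpoints that are speciation/leaf vertices to their species-tree images without changing the value; edges from (A5), namely $(x,u)$ and $(u,y)$ for $\mu(u)=(x,y)\in E(S)$, increase precisely by condition (C2), $\tau_S(x) < \tau_T(u) < \tau_S(y)$. A directed graph admitting a strictly increasing real labeling on every edge has no directed cycle, so $A_1$ is a DAG.

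\medskip
\noindent\textbf{($\Leftarrow$) From acyclicity to time-consistency.}
Conversely, suppose $A_1$ is a DAG. Since $A_1$ is a finite DAG it admits a topological order, equivalently a strictly increasing real labeling $\tau^\ast: V(A_1)\to\mathbb{R}$ with $\tau^\ast(a) < \tau^\ast(b)$ whenever $(a,b)\in E(A_1)$; I would moreover take $\tau^\ast$ injective (perturb if necessary). Define $\tau_S := \tau^\ast|_{V(S)}$ and define $\tau_T$ on $V(T)$ by $\tau_T(u) := \tau^\ast(\mu(u))$ if $t(u)\in\{\spe,\lea\}$ and $\tau_T(u) := \tau^\ast(u)$ otherwise. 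Now I must verify: (a) $\tau_S$ is a time map for $S$ — immediate from the (A2) edges; (b) $\tau_T$ is a time map for $T$ — for an edge $(u,v)\in E(T)$, the corresponding (A1) edge $(u',v')$ gives $\tau^\ast(u') < \tau^\ast(v')$, and by the definition of $\tau_T$ and of $u',v'$ this is exactly $\tau_T(u) < \tau_T(v)$; chaining along any $\preceq_T$-path gives strict monotonicity, so $x\prec_T y \Rightarrow \tau_T(x) > \tau_T(y)$; (c) condition (C1) holds by construction of $\tau_T$ on speciation/leaf vertices; (d) condition (C2) holds because for $u$ with $t(u)\in\{\dpl,\hgt\}$ and $\mu(u)=(x,y)$, the (A5) edges $(x,u),(u,y)$ force $\tau^\ast(x) < \tau^\ast(u) < \tau^\ast(y)$, i.e.\ $\tau_S(x) < \tau_T(u) < \tau_S(y)$. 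Hence $\mu$ is time-consistent.

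\medskip
\noindent\textbf{Main obstacle.}
The delicate point is the well-definedness of the common timeline across the two trees under the identifications forced by (C1)/(D1): a single species-tree vertex $x$ may be the $\mu$-image of several gene-tree speciation vertices, and in $A_1$ these gene-tree vertices are \emph{replaced} by $x$ in the (A1) edges rather than kept as separate nodes — so one has to be careful that ``$\tau_T(u)=\tau^\ast(\mu(u))$ for speciation/leaf $u$'' is consistent with the (A1) edges incident to $u$ (which in $A_1$ are actually incident to $\mu(u)$) and simultaneously with the genuine gene-tree edges through $u$. I expect the cleanest way to handle this is to note explicitly that, by construction (A1), every gene-tree edge incident to a speciation/leaf vertex is recorded in $A_1$ as an edge between species-tree images, so the monotonicity of $\tau_T$ along such edges is literally the monotonicity of $\tau^\ast$ along the recorded $A_1$-edge; no separate bookkeeping is needed. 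A secondary, purely technical point is justifying that we may take $\tau^\ast$ injective and real-valued, which is standard for finite DAGs (topological sort followed by assigning distinct values).
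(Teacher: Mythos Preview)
Your proposal is correct and follows essentially the same route as the paper: in both directions you (and the paper) combine $\tau_T$ and $\tau_S$ into a single function on $V(A_1)=V(S)\cup V(T)$, verify that each edge class (A1), (A2), (A5) is strictly increasing under it (using (C1) for (A1), the time-map property for (A2), and (C2) for (A5)), and in the converse direction take a topological order on $A_1$ and define $\tau_S=\tau^\ast|_{V(S)}$ and $\tau_T(u)=\tau^\ast(\mu(u))$ or $\tau^\ast(u)$ according to the event label. One small remark: despite your framing, neither you nor the paper actually needs Theorem~\ref{thm:D} or (D1)--(D3) here; the argument goes through directly from (C1)--(C2), which is precisely what both proofs end up using.
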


\begin{theorem}\label{thm:aux}
	Assume there is a reconciliation map from $(T;t,\sigma)$ to $S$. 
	There is a time-consistent reconciliation map from $(T;t, \sigma)$ to $S$ if and only 
	if the auxiliary graph $A_2$ is a DAG.
\end{theorem}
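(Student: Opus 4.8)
The plan is to route both directions through Theorem~\ref{thm:D}. Before anything else I would record that the graph $A_2$ does \emph{not} depend on the chosen reconciliation map: the edges produced by (A3) and (A4) are defined purely through the sets $\sT(\cdot)$ and through $S$, (A2) does not mention $\mu$ at all, and (A1) invokes $\mu$ only at vertices $u$ with $t(u)\in\{\spe,\lea\}$ — but there $\mu$ is forced, namely $\mu(u)=\sigma(u)$ by (M1) and $\mu(u)=\lca_S(\sT(u))$ by (M2.i). Hence, fixing any reconciliation map $\mu$ (one exists by hypothesis) makes ``the auxiliary graph $A_2$'' well defined, and $A_2$ may safely be built from the same $\mu$ that Theorem~\ref{thm:D} refers to. I would also note that every $u\in V(T)$ with $t(u)\in\{\spe,\lea\}$ is an isolated vertex of $A_2$ (in an (A1)-edge it is always replaced by $\mu(u)\in V(S)$, and (A2)--(A4) only touch vertices of $V(S)$ and duplication/HGT vertices of $T$), so such vertices are irrelevant for acyclicity.

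For the implication ``$A_2$ is a DAG $\Rightarrow$ a time-consistent reconciliation map exists'', choose a map $\tau\colon V(A_2)\to\mb R$ with $\tau(a)<\tau(b)$ whenever $(a,b)\in E(A_2)$, and set $\ts:=\tau|_{V(S)}$ and, for $u\in V(T)$, $\tT(u):=\tau(\mu(u))$ if $t(u)\in\{\spe,\lea\}$ and $\tT(u):=\tau(u)$ otherwise. By (A2) the value $\tau$ increases strictly from each vertex of $S$ to its children, so by transitivity $\ts$ is a time map for $S$; by (A1), for every $(u,v)\in E(T)$ the edge $(u',v')$ lies in $A_2$ and $\tT(u)=\tau(u')<\tau(v')=\tT(v)$ (the $\tT$-value of a gene-tree vertex equals the $\tau$-value of its (A1)-representative), and chaining this along descending paths of $T$ shows $\tT$ is a time map for $T$. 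Condition (D1) holds by the very definition of $\tT$ on speciation and leaf vertices (all other vertices of $T$ are mapped to edges of $S$, never onto a vertex of $S$). For (D2): if $t(u)\in\{\dpl,\hgt\}$ then (A3) gives $(u,\lca_S(\sT(u)))\in E(A_2)$, hence $\tT(u)<\ts(\lca_S(\sT(u)))$, and for any $x\preceq_S\lca_S(\sT(u))$ the (A2)-path from $\lca_S(\sT(u))$ down to $x$ gives $\ts(\lca_S(\sT(u)))\le\ts(x)$, so $\ts(x)>\tT(u)$. Symmetrically, for $(u,v)\in\EHGT$, (A4) gives $(\lca_S(\sT(u)\cup\sT(v)),u)\in E(A_2)$, hence $\ts(\lca_S(\sT(u)\cup\sT(v)))<\tT(u)$, and for any $x$ with $\lca_S(\sT(u)\cup\sT(v))\preceq_S x$ the (A2)-path from $x$ downwards gives $\ts(x)\le\ts(\lca_S(\sT(u)\cup\sT(v)))$, hence $\tT(u)>\ts(x)$; this is (D3). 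Theorem~\ref{thm:D} then hands back a time-consistent reconciliation map from $(T;t,\sigma)$ to $S$.

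For the converse, suppose a time-consistent reconciliation map exists. Theorem~\ref{thm:D}, applied with the $\mu$ used to define $A_2$, supplies time maps $\tT$ and $\ts$ satisfying (D1)--(D3). Define $\bar\tau\colon V(A_2)\to\mb R$ by $\bar\tau(x)=\ts(x)$ for $x\in V(S)$ and $\bar\tau(u)=\tT(u)$ for $u\in V(T)$. I would then check that $\bar\tau$ is strictly increasing along every edge of $A_2$, which rules out directed cycles (and self-loops) and hence proves $A_2$ acyclic: on an (A1)-edge $(u',v')$ coming from $(u,v)\in E(T)$ one has $\bar\tau(u')=\tT(u)$ and $\bar\tau(v')=\tT(v)$ — here (D1) is used whenever the relevant endpoint is a speciation or leaf vertex — and $\tT(u)<\tT(v)$ since $v\prec_T u$ and $\tT$ is a time map; on an (A2)-edge it is the time-map property of $\ts$; on an (A3)-edge $(u,\lca_S(\sT(u)))$ it is (D2) with $x=\lca_S(\sT(u))$; and on an (A4)-edge $(\lca_S(\sT(u)\cup\sT(v)),u)$ it is (D3) with $x=\lca_S(\sT(u)\cup\sT(v))$.

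The inequality bookkeeping above is routine once the five edge families of Definition~\ref{def:auxG} are lined up against (D1)--(D3). The two steps I expect to need genuine care are the opening observation that $A_2$ is independent of $\mu$ — without it the statement is not even well posed, since Definition~\ref{def:auxG} is phrased for a particular $\mu$ — and, in the ``$\Leftarrow$'' direction, checking that the map $\tT$ reassembled from a topological order of $A_2$ really is a time map on all of $T$; this hinges on (A1) recording \emph{every} edge of $T$ and on speciation and leaf vertices being represented consistently by their forced images in $V(S)$, so that composing the (A1)-inequalities along a descending path of $T$ reproduces $\tT(x)>\tT(y)$ for all $x\prec_T y$.
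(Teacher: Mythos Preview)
Your argument is correct and follows essentially the same route as the paper's own proof: both directions are routed through Theorem~\ref{thm:D}, with (A1)--(A4) lined up against (D1)--(D3), and in the ``DAG $\Rightarrow$ time-consistent'' direction the time maps $\tT,\ts$ are recovered from a topological order on $A_2$ in exactly the way the paper does. Your proof is in fact slightly more careful than the paper's, since you explicitly address why $A_2$ does not depend on the particular reconciliation map $\mu$ --- a point the paper leaves implicit --- and you argue directly that the combined map is monotone along every edge of $A_2$, whereas the paper phrases this via an auxiliary supergraph $A'$.
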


Naturally, Theorems \ref{thm:dag125} or \ref{thm:aux} can be used to devise
algorithms for deciding time-consistency. To this end, the efficient
computation of $\lca_S(\sT(u))$ for all $u\in V(T)$ is necessary. This can
be achieved with Algorithm \ref{alg:compute-sigma} (Appendix
\ref{proof-Algos}) in $O(|V(T)|\log(|V(S)|))$. More precisely, we have the
following statement, which is proved in Appendix \ref{proof-Algos}.
\begin{lemma}
  For a given gene tree $(T=(V,E);t,\sigma)$ and a species tree $S=(W,F)$,
  Algorithm \ref{alg:compute-sigma} correctly computes $\ell(u) =
  \lca_S(\sT(u))$ for all $u \in V(T)$ in $O(|V(T)|\log(|V(S)|))$ time.
\label{lem:lca}
\end{lemma}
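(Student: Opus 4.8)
The plan is to prove correctness by a bottom‑up induction along $T$ and then bound the running time by charging a single least‑common‑ancestor query in $S$ to each edge of $T$. The starting point is the recursion that Algorithm~\ref{alg:compute-sigma} implicitly exploits. For a leaf $u$ of $T$ we have $L_{\Th}(u)=\{u\}$, hence $\ell(u)=\lca_S(\{\sigma(u)\})=\sigma(u)$. For an interior vertex $u$ of $T$, the children of $u$ \emph{in the forest $\Th$} are precisely the children $v$ of $u$ in $T$ with $(u,v)\notin\EHGT$; since $\Th(u)$ is obtained by attaching the subtrees $\Th(v)$ for exactly these $v$, we get $L_{\Th}(u)=\bigcup_{(u,v)\in E\setminus\EHGT}L_{\Th}(v)$ and therefore $\sT(u)=\bigcup_{(u,v)\in E\setminus\EHGT}\sT(v)$. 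Iterating the identity $\lca_S(A\cup B)=\lca_S(\lca_S(A),\lca_S(B))$ from Section~\ref{sec:prelim} then yields
\[
\ell(u)=\lca_S(\sT(u))=\lca_S\bigl(\{\,\ell(v)\mid v\text{ child of }u\text{ in }T,\ (u,v)\notin\EHGT\,\}\bigr).
\]
By (O2) only HGT vertices are incident to transfer edges, and every such vertex still has a non‑transfer out‑edge, so this index set is non‑empty; together with Lemma~\ref{lem:part_gen} this shows $\sT(u)\neq\emptyset$, so $\ell(u)$ is well defined for every $u\in V(T)$.

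Correctness then follows by induction over the ancestor order $\preceq_T$. Algorithm~\ref{alg:compute-sigma} processes $V(T)$ bottom‑up, assigns $\ell(u)=\sigma(u)$ at the leaves (the base case), and at an interior vertex $u$ folds the previously computed values $\ell(v)$ of its \emph{non‑transfer} children through pairwise $\lca_S$ queries; by the displayed identity and the inductive hypothesis the result equals $\lca_S(\sT(u))$. A transfer child $v$ of $u$ is the root of a different connected component of $\Th$, and its value $\ell(v)$ is computed independently during the same traversal, so discarding it in the fold for $u$ loses no information.

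For the time bound I would first equip $S$ with a standard data structure that answers $\lca_S$ queries in $O(\log|V(S)|)$ time after $O(|V(S)|\log|V(S)|)$ preprocessing (e.g.\ binary‑lifting ancestor tables on the augmented tree $S$). The bottom‑up pass visits each vertex of $T$ once and performs at most one $\lca_S$ query per edge of $T$ consumed in the folding step, i.e.\ at most $|E(T)|=|V(T)|-1$ queries in total, besides $O(1)$ pointer‑ and label‑lookups per vertex; this costs $O(|V(T)|\log|V(S)|)$. Since every species in $\Spe=\sigma(\Gen)$ hosts at least one gene we have $|L_S|\le|L_T|$ and hence $|V(S)|=O(|V(T)|)$, so the preprocessing term $O(|V(S)|\log|V(S)|)$ is absorbed into $O(|V(T)|\log|V(S)|)$, which yields the claimed overall running time.

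I do not expect a serious obstacle here; the two points that need care are (i) that the traversal runs on $T$ while the target quantity $\sT$ lives on the forest $\Th$, so the inductive step must explicitly drop transfer children and invoke (O2)/Lemma~\ref{lem:part_gen} to guarantee well‑definedness, and (ii) the running‑time accounting tacitly relies on $|V(S)|=O(|V(T)|)$, which should be stated explicitly — or sidestepped altogether by using a linear‑preprocessing, constant‑query LCA structure on $S$.
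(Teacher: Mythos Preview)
Your proposal is correct and follows essentially the same approach as the paper: bottom-up induction on $T$ using the recursion $\sT(u)=\bigcup_{(u,v)\notin\EHGT}\sT(v)$ and the identity $\lca_S(A\cup B)=\lca_S(\lca_S(A),\lca_S(B))$, together with the charging of one pairwise $\lca_S$ query to each edge of $T$. If anything, you are more careful than the paper in two places: you explicitly invoke (O2)/Lemma~\ref{lem:part_gen} to guarantee the non-transfer child set is non-empty, and you account for the $\lca_S$ preprocessing on $S$ (which the paper silently omits) by using $|V(S)|=O(|V(T)|)$.
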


Let $S$ be a species tree for $(T;t,\sigma)$, that is, there is a valid
reconciliation between the two trees.  Algorithm \ref{alg:all} describes a
method to construct a time-consistent reconciliation map for $(T;t,\sigma)$
and $S$, if one exists, else ``No time-consistent reconciliation map
exists'' is returned.  First, an arbitrary reconciliation map $\mu$ that
satisfies the condition of Def.\ \ref{def:mu} is computed.  Second, Theorem
\ref{thm:aux} is utilized and it is checked whether the auxiliary graph
$A_2$ is not a DAG in which case no time-consistent map $\mu$ exists for
$(T;t,\sigma)$ and $S$. Finally, if $A_2$ is a DAG, then we continue to
adjust $\mu$ to become time-consistent. The latter is based on Thm.\
$\ref{thm:D}$, see the proof of Thm.\ $\ref{thm:D}$ and \ref{thm:main-algo} 
for details. In Appendix \ref{proof-Algos} we finally prove
\begin{theorem}
  Let $S= (W,F)$ be species tree for the gene tree $T=(V,E)$. Algorithm
  \ref{alg:all} correctly determines whether there is a time-consistent
  reconciliation map $\mu$ and in the positive case, returns such a $\mu$ in
  $O(|V|\log(|W|))$ time.
  \label{thm:main-algo}
\end{theorem}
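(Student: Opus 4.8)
The plan is to establish correctness and runtime of Algorithm~\ref{alg:all} by decomposing its behaviour into three phases and invoking the structural results already proved.

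\textbf{Phase 1: the initial reconciliation map.} First I would argue that lines \ref{alg:all:for1}--\ref{alg:all:for2} produce a map $\mu$ satisfying Definition~\ref{def:mu}. Since $S$ is assumed to be a species tree for $(T;t,\sigma)$, \emph{some} reconciliation map exists; the claim is that the specific ``lowest possible'' choice --- $\mu(u)=\ell(u)=\lca_S(\sT(u))$ for $t(u)\in\{\spe,\lea\}$ and $\mu(u)=(p(\ell(u)),\ell(u))$ for $t(u)\in\{\dpl,\hgt\}$ --- is itself a valid reconciliation map. Conditions (M1) and (M2.i) are immediate from the definition of $\ell$ and Lemma~\ref{lem:lca}; (M2.ii) holds because a non-speciation vertex is sent to an edge; (M2.iii) and (M3) need the fact that whenever \emph{any} reconciliation map exists, the componentwise-lowest one also works --- this is essentially the content of Lemma~\ref{lem:cond-mu2} together with the observable-scenario properties ($\Sigma$1), ($\Sigma$2), and should be quoted or cross-referenced rather than reproven. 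I expect this phase to be the main obstacle, since it is the only place where one must actually verify that pushing every vertex image as low as possible does not break incomparability constraints (M2.iii) or strictness in (M3.ii); the cleanest route is to take an arbitrary valid $\mu'$, observe $\mu(u)\preceq_S\mu'(u)$ for all $u$ by Lemma~\ref{lem:cond-mu2}, and check the constraints are monotone in the right direction.

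\textbf{Phase 2: the DAG test and time maps.} Next I would invoke Theorem~\ref{thm:aux}: a time-consistent reconciliation map from $(T;t,\sigma)$ to $S$ exists if and only if $A_2$ is a DAG. Hence if line~\ref{alg:all:cycle} detects a cycle, returning ``No time-consistent reconciliation map exists'' is correct. If $A_2$ is acyclic, a topological order $\tau$ on $V(A_2)=V(S)\cup V(T)$ exists (line~\ref{alg:all:constt1}), and we may assume all values distinct. I would then check that $\tau_S$ and $\tau_T$ as defined in lines \ref{alg:all:constt2}--\ref{alg:all:constt4} are genuine time maps and satisfy (D1)--(D3): (D1) holds because vertices of $T$ mapped to the same $x\in V(S)$ are literally assigned $\tau(x)$; the time-map property for $S$ follows from the (A2) edges; the time-map property for $T$ and conditions (D2), (D3) follow from the (A1), (A3), (A4) edges respectively, reading each edge of $A_2$ as the corresponding strict inequality. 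By Theorem~\ref{thm:D}, the existence of such $\tT,\ts$ guarantees that a time-consistent reconciliation map to $S$ exists.

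\textbf{Phase 3: repairing $\mu$ and runtime.} The final loop (lines \ref{alg:all:forx}--\ref{alg:all:mu}) lifts each duplication/HGT vertex $u$ along the path in $S$ until $\ts(x)<\tT(u)<\ts(y)$ holds for $\mu(u)=(x,y)$. I would argue this terminates and yields a valid reconciliation map: the while-loop can be cast as exactly the construction in the proof of Theorem~\ref{thm:D} that turns $\tT,\ts$ satisfying (D1)--(D3) into a time-consistent $\mu$ via (C1), (C2); termination follows because each step moves $x$ strictly toward $\rho_S$ and $\ts(\rho_S)=-1<\tT(u)$ provides a bound, while (D2) ensures we never need to go above $\lca_S(\sT(u))$'s allowed region. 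Validity of the adjusted $\mu$ (still (M2.ii), (M2.iii), (M3)) again reduces to the corresponding step in the proof of Theorem~\ref{thm:D}, which should be cited rather than redone. For the complexity bound: line~\ref{alg2:lca} costs $O(|V|\log|W|)$ by Lemma~\ref{lem:lca}; the loop of lines \ref{alg:all:for1}--\ref{alg:all:for2} is $O(|V|)$; building $A_2$ has $O(|V|+|W|)$ vertices and $O(|V|+|W|)$ edges (each of (A1)--(A4) contributes $O(|V|+|W|)$ edges, using the precomputed $\ell$ values and noting $\lca_S(\sT(u)\cup\sT(v))=\lca_S(\ell(u),\ell(v))$ is one more $\lca$ query), so the cycle check / topological sort is $O(|V|+|W|)$; and the repair loop visits each edge of $S$ at most once per gene-tree vertex along a single root-path, which a standard amortization bounds by $O(|V|\log|W|)$ or $O((|V|+|W|)\log|W|)$. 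Since $|W|\le 2|V|$ is \emph{not} assumed in general, I would be careful to state the bound as $O(|V|\log|W|)$ only under the implicit standing assumption that $S$ is a species tree for $T$ (so every edge of $S$ is ``used'', giving $|W|=O(|V|)$); otherwise the honest bound carries an additive $O(|W|)$, which I would flag. Collecting the three phases proves both the correctness and the running-time claims of Theorem~\ref{thm:main-algo}.
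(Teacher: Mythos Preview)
Your proposal is correct and follows the same three-phase structure as the paper's proof. The only noteworthy difference is in Phase~1: the paper verifies (M1)--(M3) for the ``lowest'' map $\mu$ \emph{directly} --- invoking $(\Sigma2)$ to get incomparability of $\ell(u)$ and $\ell(v)$ for (M2.iii), and the inclusion $\sT(u)\subseteq\sT(v)$ for (M3), with the strictness in (M3.ii) only sketched --- whereas you propose instead to compare against an arbitrary valid $\mu'$ via $\mu(u)\preceq_S\mu'(u)$ (from Lemma~\ref{lem:cond-mu2}) and argue by monotonicity. Both routes are sound; yours is a bit more modular and makes the (M3.ii) strictness explicit, while the paper's direct argument is shorter. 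Phases~2 and~3 match the paper almost verbatim (both defer to the proofs of Theorems~\ref{thm:aux} and~\ref{thm:D}). Your flag about the implicit $|W|=O(|V|)$ assumption behind the stated $O(|V|\log|W|)$ bound is well-taken; the paper relies on it tacitly as well when simplifying $O(|V|+|W|+|V|\log|W|)$ to $O(|V|\log|W|)$.
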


\section{Outlook and Summary}

We have characterized here whether a given event-labeled gene tree $(T;
t,\sigma)$ and species tree $S$ can be reconciled in a time-consistent
manner in terms of two auxiliary graphs $A_1$ and $A_2$ that must be DAGs.
These are defined in terms of given reconciliation maps. This condition
yields an $O(|V|\log(|W|))$-time algorithm to check whether a given
reconciliation map $\mu$ is time-consistent, and an algorithm with the same
time complexity for the construction of a time-consistent reconciliation
maps, provided one exists.  The algorithms are implemented in \texttt{C++}
using the boost graph library and are freely available at
\url{https://github.com/Nojgaard/tc-recon}.

Our results depend on three conditions on the event-labeled gene trees
  that are motivated by the fact that event-labels can be assigned to
  internal vertices of gene trees only if there is observable information on
  the event. The question which event-labeled gene trees are actually
  observable given an arbitrary, true evolutionary scenario deserves further
  investigation in future work. Here we have used conditions that arguable
  are satisfied when gene trees are inferred using sequence comparison and
  synteny information. A more formal theory of observability is still
  missing, however.

Our results provide an efficient way of deciding whether a \emph{given}
pair of gene and species tree can be time-consistently reconciled. There
are, however, in general exponentially many putative species trees. This
begs the question whether there is \emph{at least one} species tree $S$ for
a gene tree and if so, how to construct $S$.  ``Informative triples''
  extracted from the gene tree answer this question in the absence of HGT
  \cite{HHH+12}. It is plausible that this idea can be
  generalized to our current setting to provide at least a partial
  characterization \cite{Hellmuth:17}.

\subparagraph*{Acknowledgment} We thank the organizers of the 32nd TBI
Winterseminar 2017 in Bled (Slovenia), where the authors participated, met
and basically drafted the main ideas of this paper, while drinking a cold
and tasty red Union, or was it a green La{\v{s}}ko?

\bibliographystyle{plain}
\bibliography{biblio}


\clearpage
\appendix
\noindent {\bf \Huge Appendix}
\\[0.2cm]

\section{Observable Scenarios}
\label{sec:app:obs}

\par\noindent\textbf{Lemma~\ref{lem:part_gen}.} 
Condition (O2) implies that 
$\{L_{\Th}(\rho_1), \dots, L_{\Th}(\rho_k)\}$ forms a partition of $\Gen$.
\smallskip
\begin{proof}
  Since $L_{\Th}(\rho_i)\subseteq V(T)$, it suffices to show that
  $L_{\Th}(\rho_i)$ does not contain vertices of $V(T)\setminus \Gen$.
  Note, $x\in L_{\Th}(\rho_i)$ with $x\notin\Gen$ is only possible if all
  edges $(x,y)$ are removed.
	
  Let $x\in V$ with $t(x) = \triangle$ such that all edges $(x,y)$ are
  removed.  Thus, all such edges $(x,y)$ are contained in $\EHGT$.
  Therefore, there every edge of the form $(x,y)$ is a transfer edge; a
  contradiction to (O2).
\end{proof}
\medskip 

Recall that all edges labeled ``$0$'' transmit the genetic material
vertically, i.e., from one species a descendant lineage. \\

\par\noindent\textbf{Lemma.} Conditions (O1) -- (O3) imply ($\Sigma$1).
\smallskip
\begin{proof}
Since (O2) is satisfied we can apply Lemma \ref{lem:part_gen} and conclude that neither 
$\sigma(L_{\Th}(v))=\emptyset$ nor  $\sigma(L_{\Th}(w))=\emptyset$. 
Let $x \in V(T)$ with $t(x)=\bullet$. 
By Condition (O1)
$x$ has (at least two) children. 
Moreover, (O3) 
implies that there are (at least) two children $v$ and $w$ in $T$ that are contained 
in distinct species $V$ and $W$ that are incomparable in $S$. 
Note, the edges $(x,v)$ and $(x,w)$ remain in $\Th$, since only
transfer edges are removed. 
Since no transfer is contained in $\Th$, the genetic material $v$ and $w$
of $V$ and $W$, respectively, is always vertically transmitted. 
Therefore, for any leaf $v'\in L_{\Th}(v)$ we have $\sigma(v')\preceq_S V$ 
and for any leaf $w'\in L_{\Th}(w)$ we have $\sigma(w')\preceq_S W$ in $S$. 
Assume now for contradiction, that 
 $\sigma(L_{\Th}(v))\cap \sigma(L_{\Th}(w)) \neq \emptyset$. 
Let $z_1\in L_{\Th}(v)$ and $z_2\in  L_{\Th}(w)$
with $\sigma(z_1) = \sigma(z_2) = Z$. 
Since $Z\preceq_S V,W$ and $S$ is a tree, the species $V$ and $W$
must be comparable in $S$; a contradiction to (O3).
\end{proof}

\par\noindent\textbf{Lemma.} Conditions (O1) -- (O3) imply ($\Sigma$2).
\smallskip
\begin{proof}
Since (O2) is satisfied we can apply Lemma \ref{lem:part_gen} and conclude that neither 
  $\sigma(L_{\Th}(v))=\emptyset$ nor $\sigma(L_{\Th}(w))=\emptyset$.  
	Let
  $(v,w) \in \EHGT$. 
 
	By (O3) the species containing $V$ and $W$ are are incomparable in $S$. 
	Now
  we can argue along the same lines as in the proof of the previous Lemma
  to conclude that $\sigma(L_{\Th}(v))\cap \sigma(L_{\Th}(w)) = \emptyset$.
\end{proof}

\section{DTL-scenario}
\label{sec:app:dtl}

In case that the event-labeling of $T$ is unknown, but the gene tree $T$
and a species tree $S$ are given, the authors in \cite{THL:11,BAK:12}
provide an axiom set, called DTL-scenario, to reconcile $T$ with $S$. This
reconciliation is then used to infer the event-labeling $t$ of $T$. Instead
of defining a DTL-scenario as octuple \cite{THL:11,BAK:12}, we use the
notation established above:
\begin{definition}[DTL-scenario]
  For a given gene tree $(T;t,\sigma)$ on $\Gen$ and a species tree $S$ on
  $\Spe$ the map $\gamma:V(T)\to V(S)$ maps the gene tree into the species
  tree such that
  \begin{description}
  \item[(I)] For each leaf $x\in \Gen$, $\gamma(u) = \sigma(u)$.
  \item[(II)] If $u\in V(T)\setminus\Gen$ with children $v,w$, then
    \begin{itemize}
    \item[(a)] $\gamma(u)$ is not a proper descendant of $\gamma(v)$ or
      $\gamma(w)$, and
    \item[(b)] at least one of $\gamma(v)$ or $\gamma(w)$ is a descendant
      of $\gamma(u)$.
    \end{itemize}
  \item[(III)] $(u,v)$ is a transfer edge if and only if $\gamma(u)$ and
    $\gamma(v)$ are incomparable.
  \item[(IV)] If $u\in V(T)\setminus\Gen$ with children $v,w$, then
    \begin{itemize}
    \item[(a)] $t(u)=\triangle$ if and only if either $(u,v)$ or $(u,w)$ is
      a transfer-edge,
    \item[(b)] If $t(u)=\bullet$, then $\gamma(u) =
      \lca_S(\gamma(v),\gamma(w))$ and $\gamma(v),\gamma(w)$ are
      incomparable,
    \item[(c)] If $t(u)=\square$, then
      $\gamma(u)\succeq\lca_S(\gamma(v),\gamma(w))$.
    \end{itemize}
  \end{description}
  \label{def:dtl}
\end{definition}

DTL-scenarios are explicitly defined for fully resolved binary gene and
species trees.  Indeed, Fig. \ref{fig:counterDTL} (right) shows a valid
reconciliation between a gene tree $T$ and a species tree $S$ that is not
consistent with DTL-scenario. To see this, let us call the duplication
vertex $v$. The vertex $v$ and the leaf $a$ are both children of the
speciation vertex $\rho_T$.  Condition (IVb) implies that $a$ and $v$ must
be incomparable. However, this is not possible since $\gamma(v)\succeq_S
\lca_S(B,C)$ (Cond.\ (IVc)) and $\gamma(a)=A$ (Cond.\ (I)) and therefore,
$\gamma(v)\succeq_S \lca_S(B,C) = \lca_S(A,B,C) \succ_S \gamma(a)$.

Nevertheless, we show in the following that, in case both gene and species
trees are binary, our choice of reconciliation map is equivalent to the
definition of a DTL-scenario \cite{THL:11,BAK:12}.  To this end, we provide
first the following lemmas that establishes useful properties of the
reconciliation map

\sloppy
\begin{lemma}
	Let $\mu$ be a reconciliation map from $(T;t,\sigma)$ to $S$ and assume that $T$ is binary.
	 Then the following conditions are satisfied:
	\begin{enumerate}
		\item 	If $v,w\in V(T)$ are in the same connected component of $\Th$, then \\
					$\mu(\lca_{\Th}(v,w)) \succeq_S \lca_S(\mu(v),\mu(w))$.
			\label{item:lca}
	\end{enumerate}
	Let	$u$ be an arbitrary interior vertex of $T$ with children $v,w$, then:
	\begin{enumerate}  \setcounter{enumi}{1}
		\item 	$\mu(u)$ and $\mu(v)$ are incomparable in $S$ if and only if $(u,v)\in \EHGT$. \label{item:hgt}
		\item		If $t(u)=\bullet$, then	$\mu(v)$ and $\mu(w)$ are incomparable in $S$. \label{item:speci}
		\item 	If $\mu(v), \mu(w)$ are comparable or $\mu(u)\succ_S \lca_S(\mu(v),\mu(w))$, then $t(u)=\square$. \label{item:dupli}
	\end{enumerate}
\label{lem:cond-mu}
\end{lemma}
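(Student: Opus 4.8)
The plan is to establish the four items in order, using the axioms (M1)--(M3) of Definition~\ref{def:mu}, the bound from Lemma~\ref{lem:cond-mu2}, and the elementary identity $\lca(A\cup B)=\lca(\lca(A),\lca(B))$. For item~\ref{item:lca}, let $p=\lca_{\Th}(v,w)$. Since $v,w$ lie in the same connected component of $\Th$, we have $v\preceq_{\Th}p$ and $w\preceq_{\Th}p$, so by (M3) (either clause, since both give at least $\mu(v)\preceq_S\mu(p)$ and $\mu(w)\preceq_S\mu(p)$) the vertex $\mu(p)$ is a common ancestor of $\mu(v)$ and $\mu(w)$ in $S$; hence $\mu(p)\succeq_S\lca_S(\mu(v),\mu(w))$ by minimality of the lca. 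One small point to handle: if $v=p$ or $w=p$ the statement is immediate, so we may assume $v,w\prec_{\Th}p$.

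For item~\ref{item:hgt}, the ``if'' direction is exactly (M2.iii). For the converse, suppose $(u,v)\notin\EHGT$; then $v\prec_{\Th}u$ (the edge $(u,v)$ survives in $\Th$), and (M3) forces $\mu(v)\preceq_S\mu(u)$, so $\mu(u)$ and $\mu(v)$ are comparable---contradiction. For item~\ref{item:speci}, let $u$ be a speciation vertex with children $v,w$. By (M2.i), $\mu(u)=\lca_S(\sT(u))$. Since $T$ is binary and neither $(u,v)$ nor $(u,w)$ is a transfer edge (a speciation vertex has no incident transfer edge), we have $\sT(u)=\sT(v)\cup\sT(w)$, and moreover by property ($\Sigma$1) applied to the speciation vertex $u$ we get $\sT(v)\cap\sT(w)=\emptyset$. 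Now apply Lemma~\ref{lem:cond-mu2} to $v$ and $w$: $\mu(v)\succeq_S\lca_S(\sT(v))$ and $\mu(w)\succeq_S\lca_S(\sT(w))$. The key step is to argue that $\lca_S(\sT(v))$ and $\lca_S(\sT(w))$ are already incomparable: if they were comparable, say $\lca_S(\sT(v))\preceq_S\lca_S(\sT(w))$, then $\lca_S(\sT(u))=\lca_S(\lca_S(\sT(v)),\lca_S(\sT(w)))=\lca_S(\sT(w))$, which would force $\mu(u)=\mu(w)$; but then (M3.ii) (at least one of $u,w$ is a speciation) applied to $w\prec_{\Th}u$ demands $\mu(w)\prec_S\mu(u)$, a contradiction. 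Once $\lca_S(\sT(v))$ and $\lca_S(\sT(w))$ are incomparable, I need to lift this to $\mu(v),\mu(w)$; here I use that $\mu(v),\mu(w)\preceq_S\mu(u)=\lca_S(\sT(u))$ (combine (M3) with the fact that $\mu(u)$ is the lca of $\sT(v)\cup\sT(w)$), together with $\mu(v)\succeq_S\lca_S(\sT(v))$, $\mu(w)\succeq_S\lca_S(\sT(w))$: in the subtree $S(\mu(u))$ the vertices $\mu(v)$ and $\mu(w)$ sit on the branches towards $\lca_S(\sT(v))$ and $\lca_S(\sT(w))$ respectively, and since these two targets lie in distinct child-subtrees of $\mu(u)$, the vertices $\mu(v)$ and $\mu(w)$ are incomparable.

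For item~\ref{item:dupli}, argue by contraposition: if $t(u)\ne\square$ then $t(u)\in\{\bullet,\triangle\}$ (note $u$ is interior so $t(u)\ne\odot$), and I show that in either case $\mu(v),\mu(w)$ are incomparable \emph{and} $\mu(u)=\lca_S(\mu(v),\mu(w))$, contradicting the hypothesis. If $t(u)=\bullet$, incomparability is item~\ref{item:speci}; and $\mu(u)=\lca_S(\sT(u))=\lca_S(\lca_S(\sT(v)),\lca_S(\sT(w)))$, which equals $\lca_S(\mu(v),\mu(w))$ because $\mu(v),\mu(w)$ lie on the branches toward $\lca_S(\sT(v)),\lca_S(\sT(w))$ inside distinct child-subtrees of $\mu(u)$, so their lca is again $\mu(u)$. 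If $t(u)=\triangle$, then since $T$ is binary exactly one child edge, say $(u,v)$, is a transfer edge (by (O2) there is at least one transfer and at least one non-transfer edge at $u$); then by (M2.iii) $\mu(u),\mu(v)$ are incomparable, while $w\prec_{\Th}u$ gives $\mu(w)\preceq_S\mu(u)$ via (M3). From $\mu(u),\mu(v)$ incomparable and $\mu(w)\preceq_S\mu(u)$ one deduces $\mu(v),\mu(w)$ incomparable and $\lca_S(\mu(v),\mu(w))=\lca_S(\mu(v),\mu(u))\succ_S\mu(u)$, so again $\mu(u)\not\succ_S\lca_S(\mu(v),\mu(w))$ and $\mu(v),\mu(w)$ are not comparable---contradiction in both cases.

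The main obstacle I anticipate is the repeated ``lifting'' argument in items~\ref{item:speci} and~\ref{item:dupli}: passing from incomparability (or the lca value) of the lower bounds $\lca_S(\sT(v)),\lca_S(\sT(w))$ to the actual images $\mu(v),\mu(w)$. The clean way to handle this uniformly is the observation that $\mu(v)$ lies on the $S$-path from $\mu(u)$ down toward $\lca_S(\sT(v))$ (and symmetrically for $w$), which follows from $\lca_S(\sT(v))\preceq_S\mu(v)\preceq_S\mu(u)$ together with $S$ being a tree; once this ``betweenness'' is in hand, comparability/lca statements transfer mechanically. It may be cleanest to isolate this betweenness fact as a one-line sub-claim at the start of the proof and invoke it three times.
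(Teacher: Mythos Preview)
Your proofs of items~\ref{item:lca} and~\ref{item:hgt} match the paper's almost verbatim. For items~\ref{item:speci} and~\ref{item:dupli} you take a different and somewhat more elaborate route.

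For item~\ref{item:speci}, the paper argues directly by contradiction: assume $\mu(v),\mu(w)$ are comparable, say $\mu(w)\succeq_S\mu(v)$; then by Lemma~\ref{lem:cond-mu2} $\mu(w)\succeq_S\lca_S(\sT(v))$ and $\mu(w)\succeq_S\lca_S(\sT(w))$, hence $\mu(w)\succeq_S\lca_S(\sT(u))=\mu(u)$, contradicting (M3ii). Your two-step approach (first show the lower bounds $\lca_S(\sT(v)),\lca_S(\sT(w))$ are incomparable, then lift via the ``betweenness'' observation) also works, but note one slip: from $\lca_S(\sT(v))\preceq_S\lca_S(\sT(w))$ you do \emph{not} get $\mu(u)=\mu(w)$, only $\mu(u)=\lca_S(\sT(w))\preceq_S\mu(w)$; the contradiction with (M3ii) still follows, so the repair is cosmetic. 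Your invocation of~($\Sigma$1) is unnecessary here---nothing in your subsequent argument actually uses $\sT(v)\cap\sT(w)=\emptyset$.

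For item~\ref{item:dupli}, the paper treats the two disjuncts of the hypothesis separately, excluding $t(u)=\bullet$ and $t(u)=\triangle$ one at a time in each case. Your contrapositive on the event label is a legitimate alternative; just be aware that your announced goal ``$\mu(u)=\lca_S(\mu(v),\mu(w))$ in either case'' is not what you prove in the $\triangle$-case (there you obtain $\lca_S(\mu(v),\mu(w))\succ_S\mu(u)$), though this still yields the required negation $\mu(u)\not\succ_S\lca_S(\mu(v),\mu(w))$. The paper's direct route is shorter because it avoids the repeated ``lifting'' sub-claim; yours has the merit of making the betweenness $\lca_S(\sT(v))\preceq_S\mu(v)\prec_S\mu(u)$ explicit and reusable.
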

\begin{proof} We prove the Items \ref{item:lca} - \ref{item:dupli} separately. Recall, 
		Lemma \ref{lem:part_gen} 
		implies that 	$\sigma(L_{\Th}(x)) \neq \emptyset$ for all $x\in V(T)$.

		\emph{Proof of Item \ref{item:lca}:}
		Let $v$ and $w$ be distinct vertices of $T$ that are in the same connected
		component of $\Th$. Consider the unique path $P$ connecting $w$ with $v$
		in $\Th$. This path $P$ is uniquely subdivided into a path $P'$ and a path
		$P''$ from $\lca_{\Th}(v,w)$ to $v$ and $w$, respectively. Condition (M3)
		implies that the images of the vertices of $P'$ and $P''$ under $\mu$,
		resp., are ordered in $S$ with regards to $\preceq_S$ and hence, are
		contained in the intervals $Q'$ and $Q''$ that connect
		$\mu(\lca_{\Th}(v,w))$ with $\mu(v)$ and $\mu(w)$, respectively. In
		particular, $\mu(\lca_{\Th}(v,w))$ is the largest element (w.r.t.\
		$\preceq_S$) in the union of $Q'\cup Q''$ which contains the unique path
		from $\mu(v)$ to $\mu(w)$ and hence also $\lca_S(\mu(v),\mu(w))$.

		\emph{Proof of Item \ref{item:hgt}:}
		If $(u,v)\in \EHGT$ then, $t(u)=\triangle$ and (M2iii) implies that 
		$\mu(u)$ and $\mu(v)$ are incomparable.
		To see the converse, let $\mu(u)$ and $\mu(v)$
		be incomparable in $S$. Item (M3) implies that for any edge $(x,y)\in
		E(\Th)$ we have $\mu(y)\preceq_S \mu(x)$. However, since $\mu(u)$ and
		$\mu(v)$ are incomparable it must hold that $(u,v)\notin E(\Th)$. Since
		$(u,v)$ is an edge in the gene tree $T$, $(u,v)\in \EHGT$ is a transfer
		edge.

		\emph{Proof of Item \ref{item:speci}:} 
		Let $t(u)=\bullet$. 
		Since none of $(u,v)$ and $(u,w)$ are transfer-edges, it follows that 
		both edges are contained in $\Th$. 
		Then, since $T$ is a binary tree, it follows that $L_{\Th}(u) = L_{\Th}(v)\cup L_{\Th}(w)$ 
		and therefore, $\sT(u)  = \sT(v)\cup \sT(w)$.

		Therefore and by Item (M2i), 
		\[	\mu(u) = \lca_S(\sT(u))   =  	\lca_S(\sT(v)\cup \sT(w))
						  = \lca_S(\lca_S(\sT(v)), \lca_S(\sT(w))). \]
		
		Assume for contradiction that $\mu(v)$ and $\mu(w)$ are comparable, say,
		$\mu(w) \succeq_S \mu(v)$. By Lemma \ref{lem:cond-mu2}, $\mu(w) \succeq_S \mu(v) \succeq_S
		\lca_S(\sT(v))$ and $\mu(w) \succeq_S
		\lca_S(\sT(w))$. Thus, 
		\[\mu(w) \succeq_S \lca_S(\lca_S(\sT(v)), \lca_S(\sT(w))). \]
		Thus, $\mu(w) \succeq_S \mu(u) $; a contradiction to (M3ii).

      \emph{Proof of Item \ref{item:dupli}:} 
		Let $\mu(v), \mu(w)$ be comparable in $S$. Item \ref{item:speci} implies
		that $t(u)\neq \bullet$. Assume for contradiction that $t(u) = \triangle$.
		Since by (O2) only one of the edges $(u,v)$ and $(u,w)$ is a transfer edge, we
		have either $(u,v)\in \EHGT$ or $(u,w)\in \EHGT$. W.l.o.g. let $(u,v)\in
		\EHGT$ and $(u,w)\in E(\Th)$. By Condition (M3), $\mu(u)\succeq_S\mu(w)$.
		However, since $\mu(v)$ and $\mu(w)$ are comparable in $S$, also $\mu(u)$
		and $\mu(v)$ are comparable in $S$; a contradiction to Item
		\ref{item:hgt}. Thus, $t(u)\neq \triangle$. Since each interior vertex is
		labeled with one event, we have $t(u) = \square$. 

		Assume now that $\mu(u)\succ_S \lca_S(\mu(v),\mu(w))$. Hence, $\mu(u)$ is
		comparable to both $\mu(v)$ and $\mu(w)$ and thus, (M2iii) implies that
		$t(u) \neq \triangle$.  
		Lemma \ref{lem:cond-mu2} implies $\mu(v)\succeq_S \lca_S(\sT(v))$ 
 		and $\mu(w)\succeq_S \lca_S(\sT(w))$.
		\begin{equation*} 
		\lca_S(\mu(v),\mu(w)) \succeq_S  \lca_S(\lca_S(\sT(v)), \lca_S(\sT(w))) 
		= \lca_S(\sT(v)\cup \sT(w)).
		\end{equation*}
		Since $T(u) \neq \triangle$ it follows that neither $(u,v)\in \EHGT$ nor $(u,w)\in \EHGT$
		and hence, both edges are contained in $\Th$. 
		By the same argumentation as in Item \ref{item:speci} it follows that
		$\sT(u) = \sT(v)\cup \sT(w)$ and therefore, $\lca_S(\sT(v)\cup \sT(w)) = \lca_S(\sT(u))$.
		Hence, $\mu(u) \succ_S \lca_S(\mu(v),\mu(w)) \succeq_S 
		\lca_S(\sT(u))$. Now, (M2i) implies $t(u)\neq \bullet$. Since
		each interior vertex is labeled with one event, we have $t(u) = \square$. 
\hfill \end{proof}

\begin{lemma}
	Let $\mu$ be a reconciliation map for the gene tree  $(T;t,\sigma)$ and the species tree $S$
	as in Definition \ref{def:mu}. Moreover, assume that $T$ and $S$ are binary.
	Set for all $u\in V(T)$:
	\begin{equation*}
  		\gamma(u) = \begin{cases}
					        \mu(u) &\mbox{,if } \mu(u)\in V(S) \\
       					 		y 	&\mbox{,if } \mu(u) = (x,y)\in E(S)
        \end{cases}
 	\end{equation*}
	Then  $\gamma:V(T)\to V(S)$ is a map according to the DTL-scenario.
	\label{lem:mu-dtl}
\end{lemma}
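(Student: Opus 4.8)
The plan is to verify each of the four defining axioms (I)--(IV) of a DTL-scenario directly from the definition of $\gamma$, using the already-established properties of $\mu$ collected in Lemma~\ref{lem:cond-mu}. Throughout, let $u$ be an interior vertex of $T$ with children $v,w$. The key preliminary observation is that, because $T$ is binary and only transfer edges are removed from $T$ to obtain $\Th$, for a vertex $u$ with $t(u)=\bullet$ both edges $(u,v),(u,w)$ lie in $\Th$, so $\sT(u)=\sT(v)\cup\sT(w)$; this is exactly the computation used in the proof of Item~\ref{item:speci} and will be reused.

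First I would dispose of (I): if $x\in\Gen$ is a leaf, then $t(x)=\odot$ and (M1) gives $\mu(x)=\sigma(x)\in V(S)$, hence $\gamma(x)=\mu(x)=\sigma(x)$. Next, (III): one direction is immediate from (M2.iii), since if $(u,v)\in\EHGT$ then $\mu(u),\mu(v)$ are incomparable, and incomparability of $\mu$-images forces incomparability of the $\gamma$-images (note that passing from an edge $(x,y)$ to its lower endpoint $y$ cannot create comparability where there was none, because $y\preceq_S$ every species that the edge $(x,y)$ is $\preceq_S$-below, and incomparability is preserved). For the converse I would invoke Item~\ref{item:hgt} of Lemma~\ref{lem:cond-mu}: if $\gamma(u),\gamma(v)$ are incomparable then $\mu(u),\mu(v)$ are incomparable (again moving from lower endpoints of edges back up to the edges preserves incomparability), so $(u,v)\in\EHGT$. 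One has to be slightly careful here about the edge-versus-vertex bookkeeping, and I would state a small auxiliary fact: for $a,b\in V(T)$, $\gamma(a)$ and $\gamma(b)$ are comparable in $S$ if and only if $\mu(a)$ and $\mu(b)$ are comparable in $S$ (using the convention for $\preceq_S$ extended to edges from Section~\ref{sec:prelim}). This fact also streamlines the remaining verifications.

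For (II) I would argue: (a) $\gamma(u)$ is not a proper descendant of $\gamma(v)$ — if $(u,v)\in\EHGT$ the images are incomparable, done; otherwise $v\prec_{\Th}u$, and (M3) gives $\mu(v)\preceq_S\mu(u)$, which upon passing to $\gamma$ yields $\gamma(v)\preceq_S\gamma(u)$, so $\gamma(u)$ is not a proper descendant of $\gamma(v)$; symmetrically for $w$. For (b), since $T$ is binary, (O2) guarantees at most one of $(u,v),(u,w)$ is a transfer edge, so at least one child, say $w$, satisfies $w\prec_{\Th}u$; then (M3) gives $\mu(w)\preceq_S\mu(u)$, hence $\gamma(w)\preceq_S\gamma(u)$, i.e.\ $\gamma(w)$ is a descendant of $\gamma(u)$. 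Finally (IV): part (a) is exactly (III) together with the fact that $T$ is binary and each interior vertex carries exactly one event label, so $t(u)=\triangle$ iff one of its two child edges is a transfer edge. For (b), if $t(u)=\bullet$ then $\sT(u)=\sT(v)\cup\sT(w)$ as noted, so by (M2.i) and the $\lca$ union rule $\mu(u)=\lca_S(\lca_S(\sT(v)),\lca_S(\sT(w)))$; combining with Lemma~\ref{lem:cond-mu2} ($\mu(v)\succeq_S\lca_S(\sT(v))$, $\mu(w)\succeq_S\lca_S(\sT(w))$) and Item~\ref{item:speci} (which gives $\mu(v),\mu(w)$ incomparable, hence $\gamma(v),\gamma(w)$ incomparable), one shows $\gamma(u)=\lca_S(\gamma(v),\gamma(w))$; here I would need $\mu(v),\mu(w)\in V(S)$ or handle the edge case, observing that incomparable images and the $\lca$ equality pin down $\gamma(u)$ as the required least common ancestor. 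For (c), if $t(u)=\square$ then $\mu(u)\in F$, and again $\sT(u)=\sT(v)\cup\sT(w)$ (both child edges are in $\Th$ since $t(u)\ne\triangle$), so $\mu(u)\succeq_S\lca_S(\sT(u))=\lca_S(\sT(v)\cup\sT(w))$; using Lemma~\ref{lem:cond-mu2} on $v,w$ this is $\preceq_S\lca_S(\mu(v),\mu(w))$ read off, giving $\gamma(u)\succeq_S\lca_S(\gamma(v),\gamma(w))$.

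The main obstacle, and the part deserving the most care, is the systematic translation between the edge-valued map $\mu$ and the vertex-valued map $\gamma$: one must check that replacing an edge $(x,y)$ by its head $y$ preserves all the comparability and $\lca$ relations needed, in particular that $\mu(u)\succ_S\lca_S(\mu(v),\mu(w))$ and the like pass correctly to $\gamma$, and that the $\lca_S$ in $S$ computed on edges agrees with the $\lca_S$ on vertices after applying $\gamma$. I would isolate this once as the auxiliary comparability lemma mentioned above, prove it from the extended ancestor order on $V(S)\cup F$ defined in Section~\ref{sec:prelim}, and then the four axiom checks become short. A secondary subtlety is (IV.b)'s requirement that $\gamma(v),\gamma(w)$ be \emph{incomparable} (not merely that $\gamma(u)$ be their lca), which is why Item~\ref{item:speci} is essential and why the binary hypothesis on $T$ cannot be dropped.
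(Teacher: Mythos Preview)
Your approach is essentially the paper's: verify (I)--(IV) from Definition~\ref{def:dtl} using Lemma~\ref{lem:cond-mu} and Lemma~\ref{lem:cond-mu2}, after first isolating the comparability-preservation fact relating $\mu$ and $\gamma$ (the paper states this as a list of preliminary observations at the start of its proof). The verifications of (I), (II), (III), (IVa), and (IVb) match the paper's argument closely.

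There is one local slip in your treatment of (IVc). You write the chain $\mu(u)\succeq_S\lca_S(\sT(u))=\lca_S(\sT(v)\cup\sT(w))\preceq_S\lca_S(\mu(v),\mu(w))$ and claim this yields $\gamma(u)\succeq_S\lca_S(\gamma(v),\gamma(w))$. It does not: the two inequalities point away from the middle term, so nothing about the relation between $\mu(u)$ and $\lca_S(\mu(v),\mu(w))$ follows. The paper's argument for (IVc) is simpler and does not pass through $\lca_S(\sT(u))$ at all: since $t(u)=\square$, both child edges lie in $\Th$, so (M3) gives $\mu(v),\mu(w)\preceq_S\mu(u)$, hence $\gamma(v),\gamma(w)\preceq_S\gamma(u)$, and therefore $\lca_S(\gamma(v),\gamma(w))\preceq_S\gamma(u)$. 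This is exactly the tool you already used for (IIa), so the fix is immediate.
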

\begin{proof}
	We first emphasize that, by construction, $\mu(u)\succeq_S \gamma(u)$ for all
	$u\in V(T)$. Moreover, $\mu(u) =\mu(v)$ implies that $\gamma(u) =\gamma(v)$,
	and $\gamma(u) =\gamma(v)$ implies that $\mu(u)$ and $\mu(v)$ are comparable.
	Furthermore, $\mu(u)\prec_S \mu(v)$ implies $\gamma(u)\preceq_S \gamma(v)$,
	while $\gamma(u)\prec_S \gamma(v)$ implies that $\mu(u)\prec_S \mu(v)$.
	Thus, $\mu(u)$ and $\mu(v)$ are comparable if and only if $\gamma(u)$ and $\gamma(v)$
	are comparable.
	
	Item (I) and (M1) are equivalent. 
	
	For Item (II) let $u\in V(T)\setminus \Gen$ be an interior vertex with
	children $v,w$. If $(u,w) \notin \EHGT$, then $w\prec_{\Th} u$. Applying
	Condition (M3) yields $\mu(w)\preceq_{S} \mu(u)$ and thus, by construction,
	$\gamma(w)\preceq_{S} \gamma(u)$. Therefore, $\gamma(u)$ is not a proper
	descendant of $\gamma(w)$ and $\gamma(w)$ is a descendant of $\gamma(u)$. If
	one of the edges, say $(u,v)$, is a transfer edge, then $t(u) = \triangle$
	and by Condition (M2iii) $\mu(u)$ and $\mu(v)$ are incomparable. Hence,
	$\gamma(u)$ and $\gamma(v)$ are incomparable. Therefore, $\gamma(u)$ is no
	proper descendant of $\gamma(v)$. Note that (O2) implies that for each vertex $u\in
	V(T)\setminus \Gen$ at least one of its outgoing edges must be a non-transfer
	edge, which implies that $\gamma(w)\preceq_{S} \gamma(u)$ or
	$\gamma(v)\preceq_{S} \gamma(u)$ as shown before. Hence, Item (IIa) and (IIb)
	are satisfied.

	For Item (III) assume first that $(u,v) \in \EHGT$ and therefore $t(u) = \triangle$.
	Then, (M2iii) implies that $\mu(u)$ and $\mu(v)$ are incomparable and thus,
	$\gamma(u)$ and $\gamma(v)$ are incomparable.
	Now assume that $(u,v)$ is an edge in the gene tree $T$ and
	$\gamma(u)$ and $\gamma(v)$ are incomparable. Therefore, $\mu(u)$ and
	$\mu(v)$ are incomparable. Now, apply Lemma
	\ref{lem:cond-mu}(\ref{item:hgt}).

	Item (IVa) is clear by the event-labeling $t$ of $T$ and since (O2). Now assume for (IVb)
	that $t(u) = \bullet$. Lemma \ref{lem:cond-mu}(\ref{item:speci}) implies that
	$\mu(v)$ and $\mu(w)$ are incomparable and thus, $\gamma(v)$ and $\gamma(w)$
	must be incomparable as well. Furthermore, Condition (M2i) implies that
	$\mu(u) = \lca_S(\sT(u))$. Lemma
	\ref{lem:cond-mu2} implies that $\mu(v) \succeq_S
	\lca_S(\sT(v))$ and $\mu(w) \succeq_S
	\lca_S(\sT(w))$. The latter together with the incomparability of
	$\mu(v)$ and $\mu(u)$ implies that 
	\begin{equation*} 
			\begin{split}
				\lca_S(\mu(v),\mu(w)) & = \lca_S(\lca_S(\sT(v)), \lca_S(\sT(w))) \\
				&	= \lca_S(\sT(v)\cup \sT(w))=\lca_S(\sT(u)) =\mu(u).
			\end{split}
		\end{equation*}
	If $\mu(v)$ is mapped on the edge $(x,y)$ in $T$, then $\gamma(v) = y$. By
	definition of $\lca$ for edges, $\lca_S(\mu(v),\gamma(w)) =
	\lca_S(y,\gamma(w)) = \lca_S(\gamma(v),\gamma(w))$. The same argument applies
	if $\mu(w)$ is mapped on an edge. Since for all $z\in V(T)$ either
	$\mu(z)\succ_S \gamma(z)$ (if $\mu(z)$ is mapped on an edge) or $\mu(z)=
	\gamma(z)$, we always have
	\[\lca_S(\gamma(v),\gamma(w)) = \lca_S(\mu(v),\mu(w)) = \mu(u) . \]

	Since $t(u)=\bullet$, (M2i) implies that $\mu(u) \in V(S)$ and therefore, by construction of $\gamma$
	it holds that $\mu(u)=\gamma(u)$. Thus, $\gamma(u) = \lca_S(\gamma(v),\gamma(w))$.
	For (IVc) assume that $t(u) = \square$.
	Condition (M3) implies that $\mu(u)\succeq_S \mu(v), \mu(w)$ and
	therefore, $\gamma(u)\succeq_S \gamma(v), \gamma(w)$. If $\gamma(v)$ and
	$\gamma(w)$ are incomparable, then $\gamma(u)\succeq_S \gamma(v), \gamma(w)$
	implies that $\gamma(u)\succeq_S \lca_S(\gamma(v), \gamma(w))$. If
	$\gamma(v)$ and $\gamma(w)$ are comparable, say $\gamma(v)\succeq_S
	\gamma(w)$, then $\gamma(u)\succeq_S \gamma(v) = \lca_S(\gamma(v),
	\gamma(w))$. Hence, Statement (IVc) is satisfied. 
\hfill \end{proof}

\begin{lemma}
	Let $\gamma:V(T)\to V(S)$ be a map according to the DTL-scenario for the binary 
	the gene tree  $(T;t,\sigma)$ and the binary species tree $S$.
	
	Set for all $u\in V(T)$:
	\begin{equation*}
  		\mu(u) = \begin{cases}
					        \gamma(u) &\mbox{,if } t(u)\in \{\bullet, \odot\} \\
       					 	(x,\gamma(u)) \in E(S)	&\mbox{,if } t(u)\in \{\triangle, \square\}
        \end{cases}
 	\end{equation*}
	Then  $\mu:V(T)\to V(S)\cup E(S)$ is a reconciliation map according to Definition \ref{def:mu}. 
	\label{lem:dtl-mu}
\end{lemma}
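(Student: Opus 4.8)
The plan is to verify conditions (M1), (M2), and (M3) of Definition~\ref{def:mu} for $\mu$, reading all relations through the extended ancestor order $\preceq_S$ on $V(S)\cup E(S)$ (recall that an edge $(a,b)$ with $a\succ_S b$ behaves for comparisons like its head $b$: a vertex $z$ satisfies $z\prec_S(a,b)$ iff $z\preceq_S b$, and $(a,b)\prec_S z$ iff $a\preceq_S z$; two edges compare via their heads). Condition (M1) is immediate from DTL-(I). Condition (M2.ii) holds by construction, provided that whenever $t(u)\in\{\square,\triangle\}$ the vertex $\gamma(u)$ is not the augmented root of $S$, so that the edge of $S$ into $\gamma(u)$ exists; for HGT vertices this follows from DTL-(III) (no vertex is incomparable to $\rho_S$), and for duplication vertices from the standing conventions on the augmented tree. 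So the substance is (M2.i), (M2.iii) and (M3).

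I would first record that at most one of the two edges leaving any interior vertex of $T$ is a transfer edge (otherwise DTL-(III) would make $\gamma(u)$ incomparable to \emph{both} children-images, contradicting DTL-(IIb)), and then prove the auxiliary bound $\gamma(u)\succeq_S\lca_S(\sT(u))$ for all $u\in V(T)$ — the DTL-analogue of Lemma~\ref{lem:cond-mu2} — by induction along $\Th$ from the leaves. The base case is $\gamma(u)=\sigma(u)=\lca_S(\sT(u))$. For interior $u$ with children $v,w$: if neither $(u,v)$ nor $(u,w)$ is a transfer edge then $\sT(u)=\sT(v)\cup\sT(w)$, and DTL-(IIa) together with DTL-(III) force $\gamma(v),\gamma(w)\preceq_S\gamma(u)$, so the induction hypothesis gives $\gamma(u)\succeq_S\lca_S(\sT(v)\cup\sT(w))=\lca_S(\sT(u))$; if, say, $(u,v)\in\EHGT$, then $\sT(u)=\sT(w)$ and the same two axioms give $\gamma(w)\preceq_S\gamma(u)$, again closing the step (here $\sT(x)\neq\emptyset$ for all $x$ by Lemma~\ref{lem:part_gen}).

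For (M2.i), let $t(u)=\bullet$ with children $v,w$; by DTL-(IVa) neither outgoing edge is a transfer edge, so $\sT(u)=\sT(v)\cup\sT(w)$, and DTL-(IVb) gives $\gamma(u)=\lca_S(\gamma(v),\gamma(w))$ with $\gamma(v),\gamma(w)$ incomparable. Hence $\gamma(v)$ and $\gamma(w)$ lie strictly below $\gamma(u)$ in \emph{distinct} child-subtrees of $\gamma(u)$; by the auxiliary bound so do $\lca_S(\sT(v))\preceq_S\gamma(v)$ and $\lca_S(\sT(w))\preceq_S\gamma(w)$, whence $\lca_S(\sT(u))=\lca_S(\lca_S(\sT(v)),\lca_S(\sT(w)))=\gamma(u)=\mu(u)$. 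For (M2.iii), let $t(u)=\triangle$ and $(u,v)\in\EHGT$; DTL-(III) makes $\gamma(u),\gamma(v)$ incomparable, and since $\mu(u)$ is the edge into $\gamma(u)$ while $\mu(v)$ is either $\gamma(v)$ or the edge into $\gamma(v)$, the comparison rules above turn this directly into incomparability of $\mu(u)$ and $\mu(v)$; the only thing to note is that the tail of the edge into $\gamma(u)$ is an ancestor of $\gamma(v)$ only if $\gamma(u)\prec_S\gamma(v)$, which is excluded.

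For (M3), the key observation is that for every non-transfer edge $(a,b)\in E(\Th)$ one has $\gamma(b)\preceq_S\gamma(a)$ (comparable by DTL-(III), and $\gamma(a)$ not a proper descendant of $\gamma(b)$ by DTL-(IIa)); chaining this along the transfer-free path from $w$ down to $v$ gives $\gamma(v)\preceq_S\gamma(w)$ whenever $v\prec_{\Th}w$, and if moreover $t(w)=\bullet$, then $v$ lies weakly below one of the two children $c_1,c_2$ of $w$, so $\gamma(v)\preceq_S\gamma(c_i)\prec_S\lca_S(\gamma(c_1),\gamma(c_2))=\gamma(w)$ by DTL-(IVb), i.e.\ $\gamma(v)\prec_S\gamma(w)$. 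Translating through the definition of $\mu$ and the vertex/edge comparison rules then yields (M3.i) when $t(v),t(w)\in\{\square,\triangle\}$ (both images are edges, ordered by their heads) and (M3.ii) in all remaining cases; the only case genuinely needing the strict form $\gamma(v)\prec_S\gamma(w)$ is $t(v)\in\{\square,\triangle\}$, $t(w)=\bullet$, where $\mu(v)\prec_S\mu(w)$ amounts to the tail of the edge into $\gamma(v)$ being $\preceq_S\gamma(w)$. I expect the main obstacle to be precisely this bookkeeping of the extended order on $V(S)\cup E(S)$ in the mixed vertex/edge situations of (M2.iii) and (M3), together with the equality in (M2.i), where one must invoke the incomparability clause of DTL-(IVb) — not merely its lca-clause — to rule out $\gamma(u)\succ_S\lca_S(\sT(u))$.
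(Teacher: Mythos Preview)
Your proof is correct and follows essentially the same route as the paper's: both verify (M1)--(M3) by chaining DTL-(III) and (IIa) along transfer-free paths to obtain $\gamma(v)\preceq_S\gamma(w)$ whenever $v\prec_{\Th}w$, invoke the incomparability clause of (IVb) to force the strict inequality below speciation vertices, and combine the bound $\gamma(u)\succeq_S\lca_S(\sT(u))$ with (IVb) to pin down $\mu(u)=\lca_S(\sT(u))$ for $t(u)=\bullet$. Your organization is marginally cleaner in isolating that bound as a standalone induction (the paper instead argues the equality in (M2.i) by contradiction), but the underlying ideas coincide; one small wording slip---the case $t(v)\in\{\bullet,\odot\}$, $t(w)=\bullet$ also needs the strict $\gamma(v)\prec_S\gamma(w)$, not only the case you single out---is harmless since you have already established strictness whenever $t(w)=\bullet$.
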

\begin{proof}
	Let $\gamma:V(T)\to V(S)$ be a map a DTL-scenario for the binary 
	the gene tree  $(T;t,\sigma)$ and the species tree $S$. 
	
	Condition (M1) is equivalent to (I). 

	For (M3) assume that $v \preceq_{\Th} w$. The path $P$ from $v$ to $w$ in $\Th$	
	does not contain transfer edges. Thus, by (III) all vertices along $P$ are comparable. 
	Moreover, by (IIa) we have that $\gamma(w)$ is not a proper descendant of the image of its child in $S$, 
	and therefore, 
	by repeating these arguments along the vertices $x$ in $P_{wv}$, we obtain
	$\gamma(v)\preceq_S\gamma(x)\preceq_S\gamma(w)$. 
	
	If $\gamma(v)\prec_S\gamma(w)$, 
	then by construction of $\mu$, it follows that $\mu(v)\prec_S\mu(w)$.
	Thus, (M3) is satisfied, whenever $\gamma(v)\prec_S\gamma(w)$.
	Assume now that $\gamma(v)=\gamma(w)$.
	If $t(v),t(w) \in \{\square,\triangle\}$ then $\mu(v)=(x,\gamma(v))=(x,\gamma(w))=\mu(w)$
	and thus (M3i) is satisfied.
	If $t(v)=\bullet$ and $t(w)\neq\bullet$ then since $\mu(v)=\gamma(v)$ and $\mu(w) = (x,\gamma(w))$.
	Thus $\mu(v) \prec_S \mu(w)$.

	Now assume that    $\gamma(v)  =\gamma(w)$ and $w$ is a speciation
	vertex. Since $t(w) = \bullet$, for its two children $w'$ and $w''$ 
	the images $\gamma(w')$ and $\gamma(w'')$
	must be incomparable due to (IVb). W.l.o.g. assume that $w'$ is a vertex of $P_{wv}$.
	Since $\gamma(v)\preceq_S\gamma(x)\preceq_S\gamma(w)$
	for any vertex $x$ along $P_{wv}$ and $\gamma(v)  =\gamma(w)$, we obtain
	$\gamma(w')  =\gamma(w)$. However, since $\gamma(w'') \preceq_S \gamma(w)$, 
	the vertices $\gamma(w')$ and $\gamma(w'')$ are comparable in $S$;  contradicting (IVb).   
	Thus, whenever $w$ is a speciation vertex, $\gamma(w')  =\gamma(w)$  
	is not possible. 
	Therefore, $\gamma(v)\preceq_S \gamma(w')  \prec_S\gamma(w)$ and, 
	by construction of $\mu$, $\mu(v)  \prec_S\mu(w)$. Thus, (M3ii) is satisfied. 

	Finally, we show that (M2) is satisfied. To this end, observe first that
	(M2ii) is fulfilled by construction of $\mu$ and (M2iii) is an immediate 
	consequence of (III). Thus, it remains to show that (M2i) is satisfied. 
	Thus, for a given speciation vertex $u$ we need to show that 
	$\mu(u)=\lca_S(\sT(u))$. 
	By construction, $\mu(u) = \gamma(u)$. 
	Note, $\Th$ does not contain transfer edges. Applying (III) implies that for 
	all edges $(x,y)$ in $\Th$ the images $\gamma(x)$ and $\gamma(y)$ must be 
	comparable. The latter and  (IIa) implies that for all edges $(x,y)$ in $\Th$ 
	we have $\gamma(y) \preceq_S \gamma(x)$. Take the latter together,  
	$\sigma(z)=\gamma(z) \preceq_S\gamma(u)$ for any leaf $z\in L_{\Th}(u)$. 
	Therefore $\lca_S(\sT(u)) \preceq_S\gamma(u) = \mu(u)$. 
	Assume for contradiction that $\lca_S(\sT(u)) \prec_S\gamma(u) = \mu(u)$.
	Consider the two children $u'$ and $u''$ of $u$ in $\Th$. 
	Since neither $(u,u') \in \EHGT$ nor $(u,u'') \in \EHGT$ and $T$ is a binary tree, it follows that
	$L_{\Th}(u) = L_{\Th}(u')\cup L_{\Th}(u'')$ and we obtain that 
	$\sT(u)  = \sT(u') \cup \sT(u'')$. Moreover, re-using the arguments above, 
	$\lca_S(\sT(u')) \preceq_S\gamma(u')$ and $\lca_S(\sT(u'')) \preceq_S\gamma(u'')$. 
	By the arguments we used in the proof for (M3), we have 
	$\gamma(u')\prec_S \gamma(u)$	and $\gamma(u'')\prec_S \gamma(u)$. 
    In particular, $\gamma(u')$ and $\gamma(u'')$ must be contained in
	the subtree of $S$ that is rooted in the child $a$ of $\gamma(u)$
	in $S$ with $\lca_S(\sT(u))\preceq_S a$, as otherwise, 
	$\lca_S(\sT(u')) \not \preceq_S\gamma(u')$ or $\lca_S(\sT(u'')) \not\preceq_S\gamma(u'')$. 
	Moreover, neither $\lca_S(\sT(u))\preceq_S \lca_S(\sT(u'))$ nor 
	$\lca_S(\sT(u))\preceq_S \lca_S(\sT(u''))$ is possible since then 
	$\lca_S(\sT(u')) \preceq_S\gamma(u')$ and $\lca_S(\sT(u'')) \preceq_S\gamma(u'')$
	implies that $\gamma(u')$ and $\gamma(u'')$ would be comparable; contradicting   
	(IVb). Hence, 
	there remains only one way to locate $\gamma(u')$ and $\gamma(u'')$, 
	that is, they must be located in the subtree of $S$ that is rooted 
	in $\lca_S(\sT(u))$. But then we have 
	$\lca_S(\gamma(u'), \gamma(u''))\preceq_S \lca_S(\sT(u)) \prec_S \gamma(u)$;
	a contradiction to (IVb) $\gamma(u)  = \lca_S(\gamma(u'), \gamma(u''))$.
	Therefore,  $\lca_S(\sT(u))  = \gamma(u) = \mu(u)$ and (M2i) is satisfied. 
\end{proof}

Lemma \ref{lem:mu-dtl} and \ref{lem:dtl-mu} imply
\begin{theorem}
	For a binary gene tree $(T;t,\sigma)$ and a binary species tree $S$
	there is DTL-scenario if and only if there is a reconciliation $\mu$
	for $(T;t,\sigma)$ and $S$
\end{theorem}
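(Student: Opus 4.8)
My plan is to deduce the theorem directly from Lemmas~\ref{lem:mu-dtl} and~\ref{lem:dtl-mu}, which between them supply an explicit dictionary translating reconciliation maps in the sense of Definition~\ref{def:mu} into DTL-scenarios in the sense of Definition~\ref{def:dtl} and back. Since the statement is a biconditional and each implication is exactly one of these two lemmas, the theorem itself requires no further argument beyond assembling them; all the real content sits in the verifications already carried out in those lemmas, so I would present the proof of the theorem as a two-line corollary and make sure the hypotheses (binarity of both $T$ and $S$) line up with what the lemmas assume.

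First I would treat the direction ``$\mu$ exists $\Rightarrow$ DTL-scenario exists''. Given a reconciliation map $\mu$ for the binary trees $(T;t,\sigma)$ and $S$, I define $\gamma\colon V(T)\to V(S)$ by collapsing every edge-image to its lower endpoint, i.e.\ $\gamma(u)=\mu(u)$ if $\mu(u)\in V(S)$ and $\gamma(u)=y$ if $\mu(u)=(x,y)\in E(S)$, and invoke Lemma~\ref{lem:mu-dtl} to conclude that $\gamma$ satisfies (I)--(IV): (I) is (M1); (IIa)--(IIb) come from (M3) applied along non-transfer edges together with the fact, guaranteed by (O2), that every interior vertex has an outgoing non-transfer edge; (III) is (M2iii) in one direction and Lemma~\ref{lem:cond-mu}(\ref{item:hgt}) in the other; (IVa) is the labeling convention plus (O2); (IVb) combines Lemma~\ref{lem:cond-mu}(\ref{item:speci}) with (M2i) and binarity so that $\gamma(u)=\lca_S(\gamma(v),\gamma(w))$; and (IVc) follows from (M3). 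Hence a DTL-scenario exists.

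Next I would treat the converse, ``DTL-scenario exists $\Rightarrow$ $\mu$ exists''. Given a DTL-map $\gamma$ for the binary trees, I define $\mu\colon V(T)\to V(S)\cup E(S)$ by $\mu(u)=\gamma(u)$ for $t(u)\in\{\bullet,\odot\}$ and $\mu(u)=(x,\gamma(u))$ with $x$ the parent of $\gamma(u)$ for $t(u)\in\{\square,\triangle\}$, noting that the augmenting edge $(\rho_S,\lca_S(\Spe))$ makes this well defined, and invoke Lemma~\ref{lem:dtl-mu} to get (M1)--(M3): (M1) is (I); (M2ii)--(M2iii) are immediate from the construction and (III); (M2i) is the substantive step, where one shows $\mu(u)=\lca_S(\sT(u))$ for a speciation vertex $u$ by first using that non-transfer edges keep images comparable (from (III)) and directed downward (from (IIa)) to get $\lca_S(\sT(u))\preceq_S\gamma(u)$, and then ruling out strict inequality via the incomparability of the two children's images forced by (IVb) together with binarity; (M3) follows by propagating comparability along non-transfer paths and using (IVb) to force strictness whenever a speciation vertex is involved. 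Hence a reconciliation map exists, completing the biconditional.

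The main obstacle is not in the theorem at all but in the step (M2i) inside Lemma~\ref{lem:dtl-mu}: the inequality $\lca_S(\sT(u))\preceq_S\gamma(u)$ is routine, but the reverse inequality needs a careful case analysis of where the images $\gamma(u')$, $\gamma(u'')$ of the two children of $u$ can lie in $S$, using (IVb) to exclude every placement except the one that forces equality $\gamma(u)=\lca_S(\sT(u))$. Everything else, and the theorem statement itself, is bookkeeping once the two lemmas are in hand.
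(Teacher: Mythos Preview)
Your proposal is correct and matches the paper's approach exactly: the theorem is stated immediately after Lemmas~\ref{lem:mu-dtl} and~\ref{lem:dtl-mu} with the one-line justification that those two lemmas imply it. Your additional commentary on what each lemma verifies (and your identification of the (M2i) step in Lemma~\ref{lem:dtl-mu} as the only substantive point) accurately reflects the content of those lemmas as proved in the paper.
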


\section{Proof of Theorem \ref{thm:D}}
\label{sec:thm:D}
\begin{proof} %
  In the following, $x$ and $u$ denote vertices in $S$ and $T$,
  respectively.

  $(\Longrightarrow)$ Assume that there is a time-consistent reconciliation
  map $\mu$ from $(T;t, \sigma)$ to $S$, and thus two time-maps $\tau_S$
  and $\tau_T$ for $S$ and $T$, respectively, that satisfy (C1) and (C2).

  To see (D1), observe that if $\mu(u) = x\in V(S)$, then (M1) and (M2)
  imply that $t(u) \in \{ \spe, \lea \}$.  Now apply (C1).

  To show (D2), assume that $t(u) \in \{ \dpl, \hgt \}$ and $x \preceq_S
  lca_S(\sT(u))$.  By Condition (M2) it holds that $\mu(u) = (y,z) \in
  E(S)$.  Together with Lemma \ref{lem:cond-mu2} we obtain that $x \preceq_S
  \lca_S(\sT(u)) \preceq_S z \prec_S \mu(u)$. By the properties of $\ts$ we
  have
  \begin{equation*}
    \ts(x) \geq \ts(\lca_S(\sT(u)) \geq \ts(z) \stackrel{(C2)}{>} \tT(u).
  \end{equation*}

  To see (D3), assume that $(u,v) \in \mc{E}$ and $z\coloneqq
  \lca_S(\sT(u)\cup \sT(v)) \preceq_S x$. Since $t(u)=\hgt$ and by (M2ii), we
  have $\mu(u) = (y,y')\in E(S)$. Thus, $\mu(u) \prec_S y$.  By (M2iii) $\mu(u)$
  and $\mu(v)$ are incomparable and therefore, we have either $\mu(v)\prec_S y$ or
  $\mu(v)$ and $y$ are incomparable.  In either case we see that $y
  \preceq_S z$, since Lemma \ref{lem:cond-mu2} implies that
  $\lca_S(\sT(u))\preceq_S\mu(u)$ and $\lca_S(\sT(v))\preceq_S\mu(v)$.  In
  summary, $\mu(u) \prec_S y \preceq_S z \preceq_S x$.  Therefore,
  \begin{equation*}
    \tT(u) \stackrel{(C2)}{>} \ts(y) \geq \ts(z) \geq \ts(x).
  \end{equation*}
  Hence, conditions (D1)-(D3) are satisfied.

  $(\Longleftarrow)$ To prove the converse, assume that there exists a
  reconciliation map $\mu$ that satisfies (D1)-(D3) for some time-maps
  $\tT$ and $\ts$. In the following we will make use of $\ts$ and $\tT$ 
  to construct a time-consistent reconciliation map $\mu'$. 

  First we define ``anchor points'' by $\mu'(v) =\mu(v)$ for all $v \in
  V(T)$ with $t(v) \in\{ \spe, \odot\}$. Condition (D1) implies $\tT(v) =
  \ts(\mu(v))$ for these vertices, and therefore $\mu'$ satisfies (C1).

  The next step will be to show that for each vertex $u\in V(T)$ with $t(u)
  \in \{ \dpl, \hgt \}$ there is a unique edge $(x,y)$ along the path from
  $\lca_S(\sT(u))$ to $\rho_S$ with $\ts(x)<\tT(u)<\ts(y)$. We set
  $\mu'(u) =(x,y)$ for these points.  In the final step we will show that
  $\mu'$ is a valid reconciliation map.
	
  Consider the unique path $\mc{P}_u$ from $\lca_S(\sT(u))$ to $\rho_S$.
  By construction, $\ts(\rho_S) <\tT(\rho_T) \leq \tT(u)$ and by Condition
  (D2) it we have $\tT(u) < \ts(\lca_S(\sT(u)))$.  Since $\ts$ is a time
  map for $S$, every edge $(x,y)\in E(S)$ satisfies
  $\ts(x)<\ts(y)$. Therefore, there is a unique edge $(x_u,y_u)\in E(S)$
  along $\mc{P}_u$ such that either $\ts(x_u)<\tT(u)<\ts(y_u)$,
  $\ts(x_u)=\tT(u)<\ts(y_u)$, or $\ts(x_u)<\tT(u)=\ts(y_u)$. The addition
  of a sufficiently small perturbation $\epsilon_u$ to $\tT(u)$ does not
  violate the conditions for $\tT$ being a time-map for $T$. Clearly
  $\epsilon_u$ can be chosen to break the equalities in the latter two cases
  in such a way that $\ts(x_u)<\tT(u)<\ts(y_u)$ for each vertex $u\in V(T)$
  with $t(u) \in \{ \dpl, \hgt \}$. We then continue with the perturbed
  version of $\tT$ and set $\mu'(u) =(x_u,y_u)$.  By construction, $\mu'$
  satisfies (C2).

  It remains to show that $\mu'$ is a valid reconciliation map from
  $(T;t,\sT)$ to $S$. Again, let $\mc{P}_u$ denote the unique path from
  $\lca_S(\sT(u))$ to $\rho_S$ for any $u\in V(T)$.

  By construction, Conditions (M1), (M2i), (M2ii) are satisfied.  To check
  condition (M2iii), assume $(u,v)\in \EHGT$.  The original map $\mu$ is a
  valid reconciliation map, and thus, Lemma \ref{lem:cond-mu2} implies that
  $\lca_S(\sT(u))\prec_S \mu(u)$ and $\lca_S(\sT(v))\preceq_S \mu(v)$.
  Since $\mu(u)$ and $\mu(v)$ are incomparable in $S$ and
  $\lca_S(\sT(u)\cup \sT(v))$ lies on both paths $\mc{P}_u$ and $\mc{P}_v$
  we have $\mu(u), \mu(v) \preceq_S \lca_S(\sT(u)\cup \sT(v))\eqqcolon x$.
  In particular, $x\neq \lca_S(\sT(u))$ and $x\neq\lca_S(\sT(v))$.
	
  Conditions (D1) and (D2) imply that $\ts(x)<\tT(u)<\ts(\lca_S(\sT(u)))$
  and $\ts(x)< \tT(v) \le\ts(\lca_S(\sT(v)))$.  By construction of $\mu'$, the
  vertex $u$ is mapped to a unique edge $e_u = (x_u,y_u)$ and $v$ is mapped
  either to $\lca_S(\sT(v))\neq x$ or to the unique edge $e_v=(x_v,y_v)$,
  respectively. In particular, $\mu'(u)$ lies on the path $\mc{P}'$ from
  $x$ to $\lca_S(\sT(u))$ and $\mu'(v)$ lies one the path $\mc{P}''$ from
  $x$ to $\lca_S(\sT(v))$. The paths $\mc{P}'$ and $\mc{P}''$ are
  edge-disjoint and have $x$ as their only common vertex.  Hence, $\mu'(u)$
  and $\mu'(v)$ are incomparable in $S$, and (M2iii) is satisfied.
	
  In order to show (M3), assume that $u\prec_{\Th} v$.  Since $u\prec_{\Th}
  v$, we have $\sT(u)\subseteq \sT(v)$.  Hence, $\lca_S(\sT(u)) \preceq
  \lca_S(\sT(v))\preceq_S\rho_S$.  In other words, $\lca_S(\sT(v))$ lies on
  the path $\mc{P}_u$ and thus, $\mc{P}_v$ is a subpath of $\mc{P}_u$.  By
  construction of $\mu'$, both $\mu'(u)$ and $\mu'(v)$ are comparable in
  $S$.  Moreover, since $\tT(u)>\tT(v)$ and by construction of $\mu'$, it
  immediately follows that $\mu'(u) \preceq_S\mu'(v)$.  

  Its now an easy task to verify that (M3) is fulfilled by considering the
  distinct event-labels in (M3i) and (M3ii), which we leave to the reader.
\end{proof}

\section{Theorem \ref{thm:onlyTT}}
\label{sec:thm:onlyTT}

\begin{theorem}
  Let $\mu$ be a reconciliation map from $(T;t, \sigma)$ to $S$. There is a
  time-consistent reconciliation map $(T;t, \sigma)$ to $S$ if and only if
  there is a time map $\tau_T$ such that for all $u,v,w\in V(T)$:
  \begin{description}
  \item[(T1)] If $t(u) = t(v) \in \{ \spe,\odot\}$ then
    \begin{enumerate}
    \item[(a)] If $\mu(u)=\mu(v)$, then $\tT(u) = \tT(v)$.
    \item[(b)] If $\mu(u)\prec_S \mu(v)$, then $\tT(u) > \tT(v)$.
    \end{enumerate}
  \item[(T2)] If $t(u)\in \{\spe, \odot\}$, $t(v)\in \{\dpl,\hgt\}$ and
    $\mu(u)\preceq_S \lca_S(\sT(v))$, then $\tau_T(u)>\tau_T(v)$.
  \item[(T3)] If $(u,v) \in \mc{E}$ and $\lca_S(\sT(u) \cup \sT(v))
    \preceq_S \lca_S(\sT(w))$ for some $w\in V(T)$, then $\tau_T(u) >
    \tau_T(w)$
  \end{description}
\label{thm:onlyTT}
\end{theorem}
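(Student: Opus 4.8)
The plan is to deduce Theorem~\ref{thm:onlyTT} from Theorem~\ref{thm:D}. For the fixed reconciliation map $\mu$, Theorem~\ref{thm:D} already says that a time-consistent reconciliation map from $(T;t,\sigma)$ to $S$ exists if and only if there are time maps $\tau_T$ for $T$ and $\tau_S$ for $S$ satisfying (D1)--(D3), so it suffices to show that the existence of a single time map $\tau_T$ for $T$ satisfying (T1)--(T3) is equivalent to the existence of such a pair $(\tau_T,\tau_S)$. Throughout I would use the observation that, by (M2.i) for speciation vertices and by (M1) together with $\sT(u)=\{\sigma(u)\}$ for leaves, every $u\in V(T)$ with $t(u)\in\{\spe,\odot\}$ satisfies $\mu(u)=\lca_S(\sT(u))\in V(S)$.

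For the easy direction, assume $(\tau_T,\tau_S)$ satisfy (D1)--(D3); I would check that $\tau_T$ satisfies (T1)--(T3). (T1) follows from (D1) and the fact that $\tau_S$ is a time map: if $\mu(u)=\mu(v)$ then $\tau_T(u)=\tau_S(\mu(u))=\tau_S(\mu(v))=\tau_T(v)$, and if $\mu(u)\prec_S\mu(v)$ then $\tau_T(u)=\tau_S(\mu(u))>\tau_S(\mu(v))=\tau_T(v)$. For (T2): if $t(u)\in\{\spe,\odot\}$ then $\mu(u)\in V(S)$, so $\tau_T(u)=\tau_S(\mu(u))$ by (D1); since $\mu(u)\preceq_S\lca_S(\sT(v))$ and $t(v)\in\{\dpl,\hgt\}$, applying (D2) with $x=\mu(u)$ gives $\tau_S(\mu(u))>\tau_T(v)$, hence $\tau_T(u)>\tau_T(v)$. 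For (T3): applying (D3) with $x=\lca_S(\sT(w))$ gives $\tau_T(u)>\tau_S(\lca_S(\sT(w)))$; if $t(w)\in\{\spe,\odot\}$ then $\mu(w)=\lca_S(\sT(w))$ and (D1) gives $\tau_S(\lca_S(\sT(w)))=\tau_T(w)$, while if $t(w)\in\{\dpl,\hgt\}$ then (D2) with $x=\lca_S(\sT(w))$ gives $\tau_S(\lca_S(\sT(w)))>\tau_T(w)$; in both cases $\tau_T(u)>\tau_T(w)$.

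The substantive direction produces, from a time map $\tau_T$ satisfying (T1)--(T3), time maps (not necessarily keeping this $\tau_T$) satisfying (D1)--(D3). The cleanest route I see is via the auxiliary digraph $A_2$ of Definition~\ref{def:auxG}: show that (T1)--(T3) force $A_2$ to be acyclic, then read $\tau_S$ and a time map for $T$ off a topological order of $A_2$ (identifying each speciation/leaf vertex with its $\mu$-image, as in Algorithm~\ref{alg:all}), verify that the resulting pair satisfies (D1)--(D3) after a generic perturbation, and invoke Theorem~\ref{thm:D} (equivalently, once $A_2$ is a DAG one may appeal to Theorem~\ref{thm:aux}). For acyclicity I would argue by contradiction from a putative directed cycle $C$ in $A_2$, using two structural facts. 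First, every leaf of $S$ is a sink of $A_2$: an outgoing (A2)-arc would need a child in $S$, an (A3)- or (A4)-source lies in $V(T)$, $\mu(a)$ for a speciation vertex $a$ cannot be an $S$-leaf since $|\sT(a)|\ge 2$ by ($\Sigma$1), and $\lca_S(\sT(u)\cup\sT(v))$ for $(u,v)\in\mc{E}$ cannot be an $S$-leaf since $\sT(u),\sT(v)$ are nonempty (Lemma~\ref{lem:part_gen}) and disjoint by ($\Sigma$2); hence $C$ uses no leaf of $S$. Second, every arc of $A_2$ with both endpoints in $V(S)$ strictly descends $S$ (the (A2)-arcs by definition; the (A1)-arcs between two $\mu$-images go from the image of a speciation vertex to the image of a child, hence descend by (M3.ii)); hence $C$ must visit a duplication or HGT vertex. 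Tracing $C$ between consecutive duplication/HGT vertices $z_i,z_{i+1}$ along the intermediate descending $S$-stretch, and applying the appropriate one of (T1), (T2), (T3) at the transition — together with monotonicity of $\tau_T$ wherever $C$ follows a $T$-edge — yields a chain of strict inequalities $\tau_T(z_1)>\tau_T(z_k)>\dots>\tau_T(z_1)$, the desired contradiction.

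The main obstacle is exactly this case analysis, i.e.\ showing that no directed cycle survives. Equivalently, for the direct construction one sets $\tau_S(x)=\tau_T(u)$ when $x=\mu(u)$ for a speciation/leaf vertex $u$ (well defined by (T1a) and the opening observation) and at every other $x$ picks any value strictly above all active lower bounds (the value at the $S$-parent of $x$, and $\tau_T(u)$ for duplication/HGT $u$ with $x\preceq_S\lca_S(\sT(u))$) and strictly below all active upper bounds ($\tau_T(u)$ for $(u,v)\in\mc{E}$ with $\lca_S(\sT(u)\cup\sT(v))\preceq_S x$); the obstacle is then to show that at each $x$ the intended interval is non-empty and that the anchored values are compatible with it and with one another along $S$. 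This is where (T1), (T2), (T3) are used essentially, together with Lemma~\ref{lem:cond-mu2}, (M2.i), and the (in)comparability facts about $\mu$ across transfer edges packaged in ($\Sigma$1) and ($\Sigma$2); the remaining checks — that $\tau_S$ is a time map for $S$ and that (D1)--(D3) hold — are then routine.
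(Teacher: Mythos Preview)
Your proposal is correct. The forward direction is exactly the paper's: both of you pass through Theorem~\ref{thm:D} and derive (T1)--(T3) from (D1)--(D3); your case split for (T3) according to whether $t(w)\in\{\spe,\odot\}$ or $t(w)\in\{\dpl,\hgt\}$ is in fact more explicit than what the paper writes.

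For the converse the paper takes precisely your ``direct construction'' alternative: it anchors $\tau_S(x)=\tau_T(v)$ whenever $v\in\mu^{-1}(x)$ (well defined by (T1a)), fixes $\tau_S(\rho_S)=-1$, and then fills in the remaining $x\in V(S)$ one at a time by choosing $\tau_S(x)$ strictly between $\max(\LO(x)\cup\lo(x))$ and $\min(\UP(x)\cup\up(x))$, where $\lo(x)$ and $\up(x)$ encode the (D2)- and (D3)-type constraints and $\LO(x),\UP(x)$ come from already assigned $S$-vertices; nonemptiness of this interval is argued from (T3) and from (D2), (D3) holding at the already assigned vertices. So your second route coincides with the paper's proof. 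Your primary route via acyclicity of $A_2$ is a genuinely different argument: it avoids the inductive interval-filling and instead ties Theorem~\ref{thm:onlyTT} directly to the algorithmic characterization, at the cost of invoking Theorem~\ref{thm:aux} and carrying out the exit/entry case analysis you flag (which does go through, using (T1) when both transitions are (A1) through $\mu$-images, (T2) when the exit is an (A3)-edge and the entry an (A1)-edge, and (T3) whenever the entry is an (A4)-edge). The paper's approach is more self-contained and yields $\tau_S$ explicitly; yours is shorter once Theorem~\ref{thm:aux} is available.
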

\begin{proof} 
  Suppose that $\mu$ is a time-consistent reconciliation map from $(T;t,
  \sigma)$ to $S$.  By Definition \ref{def:tc-mu} and Theorem \ref{thm:D},
  there are two time maps $\tT$ and $\ts$ that satisfy (D1)-(D3).  We first
  show that $\tT$ also satisfies (T1)-(T3), for all $u,v\in
  V(T)$. Condition (T1a) is trivially implied by (D1).  Let $t(u),t(v)\in
  \{ \spe, \lea \}$, and $\mu(u) \prec_S \mu(v)$. Since $\tT$ and $\ts$ are
  time maps, we may conclude that 
  \begin{equation*}
    \tau_T(u) \stackrel{(D1)}{=} \tau_S(\mu(u)) < \tau_S(\mu(v))
    \stackrel{(D1)}{=} \tau_T(v).
  \end{equation*}
  Hence, (T1b) is satisfied.

  Now, assume that $t(u) \in \{ \spe, \lea \}$, $t(v) \in \{ \dpl, \hgt \}$
  and $\mu(u) \preceq_S \lca(\sT(v))$. By the properties of $\tau_S$, we
  have:
  \begin{equation*}
    \tau_T(u) \stackrel{(D1)}{=} \tau_S(\mu(u)) \stackrel{(D2)}{>} \tau_T(v).
  \end{equation*}
  Hence (T2) is fulfilled.

  Finally, assume that $(u,v) \in \mc{E}$, and $x \coloneqq \lca_S(\sT(u)
  \cup \sT(v)) \preceq_S \lca_S(\sT(w))$ for some $w \in V(T)$. Lemma
  \ref{lem:cond-mu2} implies that $\lca_S(\sT(w)) \preceq_S \mu(w)$ and we
  obtain
  \begin{equation*}
    \tau_T(w) \stackrel{(D2)}{<} \ts(x) \leq \tau_S(\lca(\sT(w))) 
    \stackrel{(D3)}{<} \tau_T(u).
  \end{equation*}
  Hence, (T3) is fulfilled.
	
  To see the converse, assume that there exists a reconciliation map $\mu$
  that satisfies (T1)-(T3) for some time map $\tT$. In the following we
  construct a time map $\ts$ for $S$ that satisfies (D1)-(D3).  To this
  end, we first set
  \begin{equation*}
    \ts(x) = \begin{cases} 
      -1     & \mbox{if }       x=\rho_S \\
      \tT(v) &\mbox{else if }   v\in \mu^{-1}(x)  \\
      \ast   &\mbox{else, i.e., } \mu^{-1}(x) =\emptyset 
               \text{ and } x\neq \rho_S.\,
             \end{cases} 
   \end{equation*}
   We use the symbol $\ast$ to denote the fact that so far no value has
   been assigned to $\ts(x)$.  Note, by (M2i) and (T1a) the value $\ts(x)$
   is uniquely determined and thus, by construction, (D1) is satisfied.
   Moreover, if $x,y\in V(S)$ have non-empty preimages w.r.t.\ $\mu$ and
   $x\prec_S y$, then we can use the fact that $\tT$ is a time map for $T$
   together with condition (T1) to conclude that $\ts(x)>\ts(y)$.

   If $x \in V(S)$ with $a\in \mu^{-1}(x)$, then (T2) implies (D2) (by (D1)
   and setting $u=a$ in (T2)) and (T3) implies (D3) (by (D1) and setting
   $w=a$ in (T3)).  Thus, (D2) and (D3) is satisfied for all $x \in V(S)$
   with $\mu^{-1}(x) \neq\emptyset$.  

   Using our choices $\ts(\rho_T) = 0$ and $\ts(\rho_S) = -1$ for the
   augmented root of $S$, we must have $\mu^{-1}(\rho_S) = \emptyset$.
   Thus, $\rho_S \succ_S \lca_S(\sT(v))$ for any $v\in V(T)$.  Hence, (D2)
   is trivially satisfied for $\rho_S$.  Moreover, $\tT(\rho_T) = 0$
   implies $\tT(u) > \ts(\rho_S)$ for any $u\in V(T)$. Hence, (D3) is
   always satisfied for $\rho_S$.

   In summary, Conditions (D1)-(D3) are met for any vertex $x\in V(S)$
   that up to this point has been assigned a value, i.e., $\ts(x)\ne\ast$.

   We will now assign to all vertices $x \in V(S)$ with $\mu^{-1}(x)
   =\emptyset$ a value $\ts(x)$ in a stepwise manner.  To this end, we give
   upper and lower bounds for the possible values that can be assigned to
   $\ts(x)$.  Let $x \in V(S)$ with $\ts(x) = \ast$.  Set 
\begin{eqnarray*}
  \LO(x) & = & \{ \tau_S(y)\ |\ x \prec_S y, y\in V(S) 
  \text{ and } \tau_S(y) \neq \ast \} \\
  \UP(x) & = & \{ \tau_S(y)\ |\
  x \succ_S y,  y \in V(S)  \text{ and } \tau_S(y) \neq \ast\}.\,
  \end{eqnarray*}
  We note that $\LO(x) \neq \emptyset$ and $\UP(x) \neq \emptyset$ because
  the root and the leaves of $S$ already have been assigned a value
  $\tau_S$ in the initial step.  In order to construct a valid time map
  $\ts$ we must ensure $\max(\LO(x)) < \tau_S(x) < \min(\UP(x))$. 
	
  Moreover, we strengthen the bounds as follows.  Put
  \begin{eqnarray*}
    \lo(x) & = & \{\tau_T(u)\ |\ t(u) \in \{\dpl, \hgt \} 
    \text{ and }  x \preceq_S
    \lca_S(\sT(u)) \} \\
    \up(x) & = & \{\tau_T(u)\ |\ \text{ where } (u,v)\in\mc{E}
    \text{ and } 
    \lca_S(\sT(u) \cup \sT(v)) \preceq_S x\ \}.\,
  \end{eqnarray*}
  Observe that $\max(\lo(x)) < \min(\up(x))$, since otherwise there are
  vertices $u, w \in V(T)$ with $\tT(w)\in \lo(x)$ and $\tT(u)\in \up(x)$
  and $\tT(w)\geq \tT(u)$. However, this implies that $\lca_S(\sT(u)\cup
  \sT(v)) \preceq_S x \preceq \lca_S(\sT(w))$; a contradiction to (T3).
	
  Since (D2) is satisfied for all vertices $y$ that obtained a value
  $\ts(y)\neq \ast$, we have $\max(\lo(x)) < \min(\UP(x))$. Likewise
  because of (D3), it holds that $\max(\LO(x)) < \min(\up(x))$. Thus we set
  $\ts(x)$ to an arbitrary value such that
  \begin{equation*}
    \max(\LO(x)\cup \lo(x)) < \ts(x) < \min(\UP(x)\cup \up(x)).
  \end{equation*} 
  By construction, (D1), (D2), and (D3) are satisfied for all vertices in
  $V(S)$ that have already obtained a time value distinct from $\ast$.
  Moreover, for all such vertices with $x\prec_T y$ we have
  $\ts(x)>\ts(y)$. In each step we chose a vertex $x$
	with $\ts(x) = \ast$ that obtains then a  real-valued time stamp. Hence, 
	in each step the number of vertices that have value $\ast$ is reduced by one.
	Therefore, repeating the latter procedure will eventually  assign to all vertices 
	a real-valued time stamp such that, in particular, $\ts$ satisfies (D1), (D2),
	  and (D3) and thus is indeed a time map for $S$.
\end{proof}

\section{Proof of Theorem \ref{thm:aux-C}}
\label{sec:thm:aux-C} 
\begin{proof} 
	Assume that $\mu$ is time-consistent. 
	By Theorem \ref{thm:D}, there are two time-maps $\tT$ and $\ts$ satisfying (C1) and (C2). 
	Let $\tau = \tT \cup \ts$ be the map from $V(T)\cup V(S)\to \mb{R}$. 
	Let $A'$ be the directed graph with $V(A') = V(S)
	\cup V(T)$ and set for all $x,y\in V(A')$: $(x,y)\in E(A')$ if and only if $\tau(x) < \tau(y)$. 
	By construction $A'$ is a DAG since $\tau$ provides a topological order on $A'$ \cite{Kahn:62}.
	
	We continue to show that $A'$ contains all edges of {$A_1$}.

	To see that (A1) is satisfied for $E(A')$ let $(u,v)\in E(T)$. Note, 
	$\tau(v)>\tau(u)$, since $\tT$ is a  time map for $T$ and by construction of $\tau$. 
	Hence, all edges $(u,v)\in E(T)$ are also contained in $A'$, independent from the respective event-labels
	$t(u), t(v)$. Moreover, if $t(u)$ or $t(v)$ are speciation vertices or leaves, then 
	(C1) implies that $\tau_S(\mu(u)) =	\tau_T(u) > \tau_T(v)$ or 
	$\tau_T(u) > \tau_T(v) = \tau_S(\mu(v))$. By construction of $\tau$, 
	all edges satisfying (A1) are contained in $E(A')$.
	Since $\ts$ is a  time map for $S$, all edges as in (A2) are contained in $E(A')$. 
	Finally, (C2) implies that all edges satisfying (A5)  are contained in $E(A')$.
	
	Although, $A'$ might have more edges than required by (A1), (A2) and (A5), 
	the graph $A_1$ is a subgraph of $A'$. Since  $A'$ is a DAG, also $A_1$ is a DAG.

	For the converse assume that $A_1$ is a directed graph with $V(A_1) =
	V(S) \cup V(T)$ and edge set $E(A_1)$ as constructed in  Def.\ \ref{def:auxG} 
	(A1), (A2) and (A5).
	Moreover, assume that $A_1$ is a DAG. Hence, there is 
 	is a topological order $\tau$ on $A_1$ with $\tau(x) < \tau(y)$ whenever $(x,y) \in E(A_1)$.
	In what follows we construct the time-maps $\tT$ and $\ts$
	such that they satisfy (C1) and (C2). Set $\ts(x) = \tau(x)$ for all $x\in V(S)$.
	Additionally, set for all $u \in V(T)$:
	\[
		\tau_T(u) = 
		\begin{cases}
			\tau(\mu(u)) & \text{if } t(u) \in \{ \lea, \spe \}\\
			\tau(u) & \text{otherwise}.\,
		\end{cases}
	\]
	By construction it follows that (C1) is satisfied. Due to (A2), $\tau_S$ is
	a valid time map for $S$. It follows from the construction and (A1) that $\tau_T$ is a
	valid time map for $T$. 
	Assume now that $u\in V(T)$, $t(u) \in \{ \dpl, \hgt \}$, and 
	$\mu(u) = (x,y) \in E(S)$. Since $\tau$ provides  a topological
	order we have:
	\[
		\tau(x) \stackrel{(A5)}{<} \tau(u) \stackrel{(A5)}{<} \tau(y).\,
	\] By construction, it follows that $\ts(x) < \tT(u) < \ts(y)$ satisfying (C2).
	
\end{proof}

\section{Proof of Theorem \ref{thm:aux}}
\label{sec:thm:aux}
\begin{proof} 
	The proof is similar to the proof of Theorem \ref{thm:aux-C}.
	Assume there is a time-consistent reconciliation map $(T;t, \sigma)$ to $S$. 
	By Theorem \ref{thm:D},  there are two time-maps $\tT$ and $\ts$ satisfying (D1)-(D3). 
	Let
	$\tau$ and $A'$ be defined as in the proof of Theorem \ref{thm:aux-C}.

	Analogously to the proof of Theorem \ref{thm:aux-C}, we show that $A'$
	contains all edges of $A_2$. Application of (D1) immediately implies that all
	edges satisfying (A1) and (A2) are contained in $E(A')$. By condition (D2),
	it yields $(u,lca_S(\sT(u))) \in E(A')$ and (D3) implies $(lca_S(\sT(u)\cup
	\sT(v)),u)\in E(A')$. We conclude by the same arguments as before that the
	graph $A_2$ is a DAG.
	
	For the converse, assume we are given the directed acyclic graph $A_2$.
	As before, there is 
 	is a topological order $\tau$ on $A_2$ with $\tau(x) < \tau(y)$ only if $(x,y) \in E(A_2)$.
	The time-maps $\tT$ and $\ts$ are given as in the proof of Theorem \ref{thm:dag125}.

	By construction, it follows that (D1) is satisfied. Again, by construction and
	the Properties (A1) and (A2), 
	$\tau_S$  and $\tau_T$ are  valid time-maps for $S$ and $T$ respectively.
	
	Assume now that $u\in V(T)$, $t(u) \in \{ \dpl, \hgt \}$, and $x \preceq_S
	\lca_S(\sT(u))$ for some $x\in V(S)$. Since there is a topological
	order on $V(A_2)$, we have
	\[
		\tau(x) \stackrel{(A2)}{\geq} \tau(\lca_S(\sT(u))) \stackrel{(A3)}{>} \tau(u).\,
	\]
	By construction, it follows that $\ts(x) > \tT(u)$. Thus, (D2) is
	satisfied.

	Finally assume that $(u,v)\in \mc{E}$ and $\lca_S(\sT(u) \cup \sT(v))
	\preceq_S x$ for some $x \in V(S)$. Again, since $\tau$ provides a topological
	order, we have:
	\[
		\tau(x) \stackrel{(A2)}{\leq} \tau(\lca_S(\sT(u) \cup \sT(v))) \stackrel{(A4)}{<} \tau(u).\,
	\] By construction, it follows that $\ts(x) < \tT(u)$, satisfying (D3).

	Thus $\tau_T$ and $\tau_S$ are valid time maps satisfying (D1)-(D3).
\end{proof}

\section{Algorithm \ref{alg:compute-sigma}, Proof of Lemma \ref{lem:lca} and Theorem \ref{thm:main-algo}}
\label{proof-Algos}

\begin{algorithm}
\caption{Compute $\ell(u) = \lca_S(\sT(u))$ for all $u \in V(T)$
\label{alg:compute-sigma}}
\begin{algorithmic}[1]
\Function{ComputeLcaSigma}{$(T;t,\sigma), S$}
	\State{$\ell(u) \gets \emptyset$ for all $u\in V(T)$} \Comment{``$\emptyset$'' means uninitialized}
	\Let{$A$}{empty stack} 
	\State $A.push(\rho_T)$
	\While{$A$ is not empty}
		\Let{$u$}{$A.pop()$}
		\If{$t(u) = \lea$} $\ell(u)\gets \sigma(u)$  \label{alg:lca:leaf}
		\ElsIf{$\ell(v) = \emptyset$ for some child $v$ of $u$} $A.push(u)$, $A.push(v)$ \label{alg:lca:child}
 		\Else \label{alg:lca:child2}
			\Let{$\ell(u)$}{$\lca_S(\{ \ell(v)\ |\ (u,v)\in E(T) $ and $ t((u,v)) = 0 \})$} \label{alg:lca:child3}
		\EndIf
	\EndWhile
	\State\Return $\ell$
\EndFunction
\end{algorithmic}
\end{algorithm}

\begin{proof}[Proof of Lemma \ref{lem:lca}] 
	Let $u\in V(T)$. In what follows, we show that $\ell(u) =\lca_S(\sT(u))$.  
   In fact, the algorithm is (almost) a depth first search through $T$ that 
	assigns the (species tree) vertex $\ell(u)$ to $u$ if and only if
	every child $v$ of $u$ has obtained an assignment $\ell(v)$
	(cf.\ Line \eqref{alg:lca:child2} - \eqref{alg:lca:child3}). 
	That there are children $v$ with non-empty 
   $\ell(v)$ at some point is ensured by Line \eqref{alg:lca:leaf}. 
	That is, if  $t(u) = \lea$, then $\ell(u) = \lca_S(\sT(u)) = \sigma(u)$. 
	Now, assume there is an interior vertex $u\in V(T)$, where every child $v$
	has been assigned a value $\ell(v)$, then
	\begin{align*}
		\lca_S(\sT(u)) =
		&\lca_S(\sT(\{ \sT(v)\ |\ (u,v)\in E(T) \text{ and } t(u,v) = 0 \}))  \\
		=&\lca_S(\sT(\{ \lca_S(\sT(v))\ |\ (u,v)\in E(T) \text{ and } t(u,v) = 0 \}))  \\
		=&\lca_S(\sT(\{ \ell(v)\ |\ (u,v)\in E(T) \text{ and } t(u,v) = 0 \})) 
	\end{align*}
	The latter is achieved by Line \eqref{alg:lca:child3}. 

	Since $T$ is a tree and the algorithm is in effect a depth first search through
	$T$, the while loop runs at most $O(V(T)+ E(T))$ times, and thus in $O(V(T))$ time. 

	The only non-constant operation within the while loop is the computation of
	$\lca_S$ in Line $\eqref{alg:lca:child3}$. Clearly $\lca_S$ of a set of
	vertices $C = \{ c_1, c_2 \dots c_k \}$, where $c_i \in V(S)$, for all $c_i
	\in C$ can be computed as sequence of $\lca_S$
	operations taking two vertices: $\lca_S(c_1,\lca_S(c_2, \dots
	\lca_S(c_{k-1}, c_k)))$, each taking $O(\lg(|V(S)|))$ time. Note however,
	that since Line \eqref{alg:lca:child3} is called exactly once for each
	vertex in $T$, the number of $\lca_S$ operations taking two vertices is
	called at most $|E(T)|$ times through the entire algorithm. Hence, the total
	time complexity is $O(|V(T)|\lg(|V(S)|))$.
\end{proof}

\begin{proof}[Proof of Theorem \ref{thm:main-algo}]
	In order to produce a time-consistent reconciliation map, we first
	construct some valid reconciliation map $\mu$ from $(T;t,\sigma)$ to $S$. Using
	the $\lca$-map $\ell$ from Algorithm \ref{alg:compute-sigma}, $\mu$ will be
	adjusted to become time-consistent, if possible. 
		
	By assumption, there is  a reconciliation map from $(T;t,\sigma)$ to $S$.
	The for-loop (Line \eqref{alg:all:for1}-\eqref{alg:all:for2}) ensures that each 
	vertex $u\in V$ obtained a value $\mu(u)$. 
	We continue to show that $\mu$ is a valid reconciliation map satisfying 
	(M1)-(M3). 

	Assume that $t(u) = \lea$, in this case $\ell(u) = \sigma(u)$, and thus
	(M1) is satisfied. If $t(u) = \spe$, it holds that $\mu(u) = \ell(u) =
	\lca_S(\sT(u))$, thus satisfying (M2i). Note that $\rho_S \succ_S \ell(u)$, 
	and hence, $\mu(u) \in F$ by Line \eqref{alg:all:edge}, implying that
	(M2ii) is satisfied.
	Now, assume $t(u) = \hgt$ and $(u,v)\in \mc{E}$. By assumption, we know
	there exists a reconciliation map from $T$ to $S$, thus by ($\Sigma$2):
	\[	\sT(u) \cap \sT(v) = \emptyset	\]
	It follows that, $\ell(u)$ is incomparable to $\ell(v)$, satisfying (M2iii).

	Now assume that $u,v\in V$ and $u \prec_{\Th} v$. Note that
	$\sT(u) \subseteq \sT(v)$. It follows that
	$\ell(u) = \lca_S(\sT(u)) \preceq_S \lca_S(\sT(v)) = \ell(v)$.
	By construction,  (M3) is satisfied. Thus, $\mu$ is a valid reconciliation map.

	By Theorem \ref{thm:aux}, two time maps $\tau_T$ and $\tau_S$ satisfying (D1)-(D3) only exists if
	the auxiliary graph $A$ build on Line \eqref{alg:all:A} is a DAG. Thus if $A$ contains a
	cycle, no such time-maps exists and the statement ``No time-consistent reconciliation map exists.''
   is returned (Line \eqref{alg:all:cycle}). On the other hand, if $A$ is a DAG, the
	construction in Line \eqref{alg:all:constt1}-\eqref{alg:all:constt4} is 
	identical to the construction used in the
	proof of Theorem \ref{thm:aux}. Hence correctness of this part of the algorithm 
	follows directly from the proof of Theorem \ref{thm:aux}.
	
	Finally, we adjust $\mu$ to become a time-consistent reconciliation map.. 
	By the latter arguments, $\tau_T$ and $\tau_S$ satisfy (D1)-(D3) w.r.t.\
	to $\mu$. Note, that $\mu$ is chosen to be the ``lowest point'' where 
	a vertex $u \in V$ with $t(u) \in \{ \dpl, \hgt \}$ can be mapped, that is, 
	$\mu(u)$ is set to $(p(x), x)$ where $x =\lca_S(\sT(u))$.
	However, by the arguments in the proof of Theorem	\ref{thm:D}, there is 
	a unique edge $(y,z)\in	W$ on the path from $x$ to $\rho_S$ such that $\tau_S(y) < \tau_T(u) < \tau_S(z)$.
	The latter is ensured by choosing a different value for distinct vertices in $V(A)$, see comment in Line \eqref{alg:all:comment}.
	Hence, Line \eqref{alg:all:mu} ensures, that $\mu(u)$ is mapped on the correct edge such that 
	(C2) is satisfied. 
	It follows that adjusted $\mu$ is a valid time-consistent reconciliation map.
	
	We are now concerned with the time-complexity. 
	By Lemma \ref{lem:lca}, computation of 	 $\ell$ in Line \eqref{alg2:lca} takes 
	 $O(|V|\log(|W|))$ time and the for-loop  (Line \eqref{alg:all:for1}-\eqref{alg:all:for2})
	takes $O(|V|)$ time. 
	We continue to show that the auxiliary 
	graph $A$ (Line \eqref{alg:all:A})  can be constructed in $O(|V|\log(|W|))$ time.

	Since we know $\ell(u) = \lca_S(\sT(u))$ for all $u\in V$ and since $T$ and $S$
   are trees, the subgraph with edges satisfying (A1)-(A3) can be constructed in 
	$O(|V|+|W| + |E| + |F)|) = O(|V|+|W|)$ time. 
	To ensure (A4),  we must compute for a possible transfer edges $(u,v)\in \EHGT$ 
	the vertex $\lca_S(\sT(u) \cup	\sT(v))$. 
	which can be done in $O(\log(|W|))$ time. 
	Note, the number of transfer edges is bounded by the number of possible transfer event $O(|V|)$. 
	Hence, generating all edges satisfying (A4) takes 
	 $O(|V|(\log(|W|))$ time. 
	In summary, computing $A$ can done in $O(|V|+|W|+|V|(\log(|W|)) = O(|V|(\log(|W|))$ time. 
	
	To detect whether $A$ contains cycles one has to determine whether there is 
	a topological order $\tau$ on $V(A)$ which can be done 
	via depth first search in $O(|V(A)|+|E(A)|)$ time.
	Since $|V(A)| = |V|+|W|$ and $O(|E(A)|) = O(|F|+|E|+|W|+|V|)$	and $S,T$ are trees, 
	the latter task can be done in $O(|V| + |W|)$ time. 
	Clearly, Line \eqref{alg:all:constt2}-\eqref{alg:all:constt4} can be performed
	on $O(|V| + |W|)$ time.  
	
	Finally, we have to adjust $\mu$ according to $\tT$ and $\ts$. 
	Note, that for each 	$u\in V$ with 
	$t(u) \in \{ \dpl, \hgt \}$ (Line \eqref{alg:all:forx}) 
  	we have possibly adjust $\mu$ to the next edge $(p(x),x)$. 
	However, the possibilities for the choice of $(p(x),x)$
	is bounded by by the height of $S$, 	which is in the worst case  
	$\log(|W|)$.  Hence, the for-loop in Line \eqref{alg:all:forx}
	has total-time complexity  $O(|V|\log(|W|))$.
	
	In summary, the overall time complexity of Algorithm \ref{alg:all} is 
	$O(|V|\log(|W|))$.
\end{proof}

\end{document}